\documentclass[12pt]{article}
\usepackage[utf8]{inputenc}
\usepackage{float}
\usepackage{amsfonts, amsmath, amssymb, amsthm, mathrsfs}
\usepackage{graphicx}
\usepackage{enumerate}
\usepackage{natbib}
\usepackage{url} 

\newcommand{\blind}{1}

\addtolength{\oddsidemargin}{-.5in}%
\addtolength{\evensidemargin}{-1in}%
\addtolength{\textwidth}{1in}%
\addtolength{\textheight}{1.7in}%
\addtolength{\topmargin}{-1in}%

\newtheorem{theorem}{Theorem}
\setcounter{secnumdepth}{4}
\setcounter{tocdepth}{4}
\theoremstyle{plain}
\newtheorem{assumption}{Assumption} 
\newtheorem{lemma}{Lemma}


\newcommand{\nn}{\nonumber}

\begin{document}

\def\spacingset#1{\renewcommand{\baselinestretch}%
{#1}\small\normalsize} \spacingset{1}


\if1\blind
{
  \title{\bf Estimation of Causal Effects Under K-Nearest Neighbors Interference}
  \author{
  	Samirah Alzubaidi\\
    Department of Mathematics, Al-Qunfudah University College,\\ Umm Al-Qura University \\
    and \\
    Michael J. Higgins \\
    Department of Statistics, Kansas State University}
  \maketitle
} \fi

\if0\blind
{
  \bigskip
  \bigskip
  \bigskip
  \begin{center}
    {\LARGE\bf Estimation of Causal Effects Under K-Nearest Neighbors Interference}
\end{center}
  \medskip
} \fi



 
\bigskip
\begin{abstract}

Considerable recent work has focused on methods for analyzing experiments which exhibit \textit{treatment interference}---that is, when the treatment status of one unit may affect the response of another unit.
Such settings are common in experiments on social networks. 
We consider a model of treatment interference---the $K$-nearest neighbors interference model (KNNIM)---for which the response of one unit depends not only on the treatment status given to that unit, but also the treatment status of its $K$ ``closest'' neighbors.
We derive causal estimands under KNNIM in a way that allows us to identify how each of the $K$-nearest neighbors contributes to the indirect effect of treatment.
We propose unbiased estimators for these estimands and derive conservative variance estimates for these unbiased estimators.
We then consider extensions of these estimators under an assumption of no weak interaction between direct and indirect effects.
We perform a simulation study to determine the efficacy of these estimators under different treatment interference scenarios.
We apply our methodology to an experiment designed to assess the impact of a conflict-reducing program in middle schools in New Jersey, and we give evidence that the effect of treatment propagates primarily through a unit's closest connection.



\end{abstract}

\noindent%
{\it Keywords:}  Causal inference under
interference; Network effects; Randomization inference; Peer effects; Spillover 
\vfill

\newpage
\spacingset{1.9} 
\section{Introduction}
\label{sec:intro}


In randomized experiments, assessing causal effects requires special care when the treatment condition assigned to one unit is allowed to affect the response of other units. 
Under this setting, a 
unit's outcome is not only influenced by its own treatment status---a \textit{direct effect} of treatment---but may also be influenced by other units' treatments---an \textit{indirect effect}~\citep{sobel2006randomized, rosenbaum2007interference, hudgens2008toward}.
In causal inference terminology, these experiments exhibit 
\textit{treatment interference}, \textit{treatment spillover}, \textit{network effects}, or \textit{peer effects}.
 This interference is especially common in settings with a social factor where units are allowed to interact with each other, for example, in studies on social media networks.

Experiments exhibiting treatment interference violate 
the \textit{stable unit treatment value assumption} (SUTVA)---a foundational assumption of traditional causal inference methods~\citep{splawa1990application, rubin1980randomization, holland1986statistics, imbens2015causal}.
In particular, SUTVA requires that the treatment assigned to one unit affects only the outcome of that unit and does not affect the outcomes of other units.
The presence of interference may complicate statistical analysis and lead to inaccurate inference if not carefully taken into account~\citep{sobel2006randomized}.


Traditionally, the effect of social influence and interaction between units has been viewed as a nuisance prohibiting accurate estimation of the direct effect of treatment.
Considerable work has focused on designing experiments to mitigate the effect of treatment interference, for example, through clustering units that are likely to interact with each other and assigning treatment to these clusters instead of individual units~\citep{ugander2013graph, gui2015network, eckles2016design}. 
However, recent applications---for example, studies conducted on social media platforms and those evaluating the efficacy of vaccination strategies---have giving rise to studies in which quantifying and estimating interference effects is of primary interest~\citep{hudgens2008toward, aronow2017estimating, forastiere2020identification,  sussman2017elements, toulis2013estimation,  alzubaidi2022detecting}.

Methods for estimating indirect effects often begin by classifying the interaction through defining an exposure mapping on the units under study---a network where nodes represent units under study and edges between vertices indicate that the corresponding units may interact with each other.
There may additionally be an interaction measure computed between each pair of units in the exposure mapping indicating the strength of that interaction.
Common settings include partial interference, in which units are partitioned into groups and treatment interference can persist within groups but not across groups~\citep{sobel2006randomized, rosenbaum2007interference, hudgens2008toward,  tchetgen2012causal, basse2018analyzing}, and neighborhood interference, in which interference is not allowed to persist outside of a small neighborhood of a unit~\citep{sussman2017elements}.  
Once the nature of the treatment interference is specified, estimators can then be defined to estimate both direct and indirect effects~\citep{aronow2017estimating}.

We build on this literature by developing estimators for these 
treatment effects under the $K$-nearest neighbors interaction model (KNNIM)~\citep{alzubaidi2022detecting}.
In this model, the treatment given to one unit may interfere with response another unit if the first unit is one of the $K$ ``closest'' individuals to the second unit with respect to a given interaction measure.
Quantifying $K$-nearest neighbors effects may help researchers tease out peer effects induced by, for example, interactions between best friends, spouses, siblings, and/or close colleagues.
Additionally, this model has several appealing properties.
First, this model allows for users with stronger interactions to produce larger indirect effects than users with weaker interactions, and will ignore potential indirect effects due to dilapidated, but technically present, connections (e.g.~Facebook friends that no longer interact with each other).  
Second, the marginal and joint probabilities for possible treatment exposures have closed-form expressions under common experimental settings, allowing for unbiased estimation of treatment effects and precise estimation of standard errors.
Finally, KNNIM may be effective in estimating indirect effects in the presence of non-transitive \textit{influencer} effects---effects induced by users that have influence over a large number of individuals, but may themselves only directly interact with a handful of individuals.


In this paper, using a potential outcomes approach, we define causal estimands for direct and $K$-nearest-neighbor indirect effects.  
We then derive Horvitz-Thompson estimators \citep{horvitz1952generalization} for these estimands that are unbiased given exact marginal and joint probabilities for possible treatment exposures, and derive conservative standard errors for these estimators. 
We provide a closed-form solution to compute these exposure probabilities under completely-randomized and Bernoulli-randomized experimental designs.
We then demonstrate how these estimators may have significantly stronger precision under an assumption of no interaction between direct and indirect effects.
We conclude by showcasing the efficacy of these methods via simulation and application to a field experiment designed to reduce conflict among middle school students in New Jersey~\citep{paluck2016changing, aronow2017estimating}.


The paper is organized as follows.
Section \ref{makereference3.2} sets up the notation and preliminaries.
The $K$-nearest neighbors interference model is provided in Section \ref{makereference3.3}.
Section \ref{makereference3.4} defines causal effects under KNNIM.
Proposed unbiased estimators with the derived properties are given in Section \ref{makereference3.5}. 
Section \ref{makereference3.6} provides estimators with stronger precision under an assumption of no interaction between direct and indirect effects.
Section \ref{makereference3.7} presents variance estimation.
Simulation studies and real data analysis to demonstrate the proposed methods are provided in Sections \ref{makereference3.8}, \ref{makereference3.9}, and \ref{makereference3.10}. We conclude in Section \ref{makereference3.11}.

\subsection{Motivating Example: Reducing Conflict in Schools}
\label{sec:motivexamp}

To motivate our approach, we refer to a recent randomized field experiment to assess the efficacy of an anti-conflict program on middle schools in New Jersey~\citep{paluck2016changing}. 
The experiment was explicitly designed to determine whether benefits of the program can be propagated through social interactions between students.
In particular, the program was implemented through randomly selecting ``seed students'' from a pre-selected list of eligible seeds within all participating schools.
Seed students were asked to actively participate in and advocate for the anti-conflict program.
School-wide benefits of the program were then propagated through interactions with these seed students.
Analysis was performed only on students that were eligible to be seeds ($N =$ 2,451).
Overall, the program showed a statistically significant improvement in anti-conflict behaviors among students. 
For details of the randomization and analysis, please refer to~\citet{paluck2016changing} and~\citet{aronow2017estimating}.

Notably, to assess potential pathways for treatment interference, students were asked to identify, in order, the 10 other students that they spent the most time with during the previous few weeks.  
This yields a unique dataset in which the strength of the interaction between two individuals under study is explicitly recorded.
Hence, an interference model that allows for direct incorporation of these relative strengths of the interactions, such as KNNIM, can yield additional insights into the structure of the treatment interference.
For example, in Section~\ref{makereference3.10}
, we give evidence that the indirect effect of the treatment was propagated primarily through the student with the ``strongest connection'' to the seed student.

\section{Notation and Preliminaries}
\label{makereference3.2}

Consider an experiment on $N$ units where each unit is 
assigned either a treatment status or a control status. 
The Neyman-Rubin Causal Model (NRCM)~\citep{splawa1990application, rubin1974estimating, holland1986statistics} is a commonly-assumed model of response for causal inference.  
Under this model, the observed response of a unit is determined by the treatment status given to that unit and the \textit{potential outcomes} for that unit---the hypothetical responses of that unit under the possible treatment statuses.
A fundamental assumption of this model is the stable-unit treatment value assumption (SUTVA), which requires that there is only a single version of each treatment status and the response of a unit is unaffected by the treatment status of any other unit~\citep{splawa1990application, rubin1980randomization, holland1986statistics, imbens2015causal}. 
Of note, SUTVA is violated for experiments that exhibit treatment interference~\citep{cox1958planning,rubin1980randomization}.  
Failing to account for violations of SUTVA can lead to inaccurate treatment effect estimates~\citep{sobel2006randomized}.

Under interference, the effect of a treatment on a unit may occur through direct application of the treatment to that unit, indirectly through application of treatment to units that affect the response of the original unit, or both~\citep{hudgens2008toward}.
When interference is allowed to take completely arbitrary forms, treatment effect estimates are often estimated with very low power, or may be unidentifiable~\citep{aronow2017estimating}.
Thus, to make progress on 
this problem, 
researchers often make assumptions restricting how interference can propagate across units~\citep{toulis2013estimation, aronow2017estimating, ugander2013graph, sussman2017elements}.

To elucidate these ideas,
we extend the potential outcomes framework to account for both direct and indirect treatment components.  
Let $W_i$ be a treatment indicator for unit $i$---that is, $W_i = 1$ if unit $i$ is given treatment and $W_i = 0$ otherwise.
Let $y_{i}(\mathbf W) = y_{i}(W_i, \mathbf W_{-i})$ denote the potential outcome of unit $i$ under treatment allocation $\mathbf W = (W_1, W_2, \ldots, W_N) \in \{0,1\}^N$, where unit $i$ is given treatment $W_i$, and the remaining treatment statuses are allocated according to $\mathbf W_{-i}$.
Responses $Y_i$ satisfy
\begin{equation}
    Y_i = \sum_{\mathbf W \in \{0,1\}^N} y_{i}(\mathbf W)\mathbf{1}(\mathbf W'=\mathbf W),
\end{equation}
where $\mathbf{1}(\mathbf W'=\mathbf W)$ is an indicator variable that is equal to 1 if and only if the observed treatment status $\mathbf W' = \mathbf W$.
That is, the response of unit $i$ only depends on the potential outcomes of unit $i$ and the observed treatment allocation across all $N$ units.


Without making assumptions on the amount of interference allowed in a study, it may be impossible to estimate common causal quantities of interest in any practical way---for example, each unit may have up to $2^N$ potential outcomes when interference is unconstrained.
Thus, researchers often place strong restrictions on the nature of the treatment interference~\citep{toulis2013estimation, aronow2017estimating, ugander2013graph, sussman2017elements, alzubaidi2022detecting}. 
This often begins by constructing an \textit{exposure mapping} $G = (V, E)$---a directed graph where each vertex $i \in V$ represents a unit under study and an edge $\vec{ij} \in E$ indicates that the treatment status of unit $i$ may potentially interfere with the response of unit $j$.  
Each edge $\vec{ij} \in E$ may also have a weight $d(i,j)$ denoting the strength of the the potential interference which may be observed through studying interactions between $i$ and $j$.  
Stronger interference between $i$ and $j$ may either correspond to  a larger or smaller value of $d(i,j)$ depending on the context of the problem---for this paper, stronger interactions are associated with smaller values of $d(i,j)$.  
Under an assumed exposure mapping $G$, for any two treatment allocations $\mathbf W$, $\mathbf W'$, we have that $y_i(\mathbf W) = y_i(\mathbf W')$ if $W_i = W_i'$ and $W_j = W_j'$ for all $j \in V$ such that $\vec{ji} \in E$;
if $\vec{ki} \notin E$, then treatment statuses $W_k$, $W'_k$ may differ without affecting equality of the potential outcomes.

Once the exposure mapping is specified, models may further restrict the nature of interference.
For example, a common assumption is that interference can only occur if a certain number or fraction of neighbors within the exposure mapping are given the treatment condition.
However, few existing models specify exactly which neighbors in the exposure mapping are allowed to interfere with a unit's response, or allow for indirect effects to differ across neighbors. 
As an alternative, we consider a model where interference of treatment on a unit $i$ is restricted to its $K$–nearest neighbors~\citep{ alzubaidi2022detecting}. 
This model allows neighbors with stronger interactions to contribute larger indirect effects, and limits the ability of weakly-interacting units to affect response.

\section{$K$-Nearest Neighbors Interference Model}
\label{makereference3.3}


The $K$-nearest neighbors interference model (KNNIM) is a recently-proposed model in which a unit $j$ is only allowed to interfere with the response of unit $i$ if $j$ is within $i$'s $K$-neighborhood~\citep{ alzubaidi2022detecting}.
The \textit{$K$-neighborhood} of unit $i$, denoted $\mathcal{N}_{iK}$, is the set of the $K$ ``closest'' units to unit $i$:
\begin{align*}
    \mathbf{\mathcal{N}}_{iK} = \{j : d(i,j) \leq d(i, (K)), j = 1,2, \ldots ,K, ~ j \ne i\}.
\end{align*}
where $d(i, (K))$ denotes the $K^\text{th}$ smallest  value of $(d(i,j))^N_{j=1}$.   
Ties between $d(i,j)$ may be broken arbitrarily to ensure that $|\mathbf{\mathcal{N}}_{iK}| = K$.
Define $\mathcal{N}_{-iK} = V \setminus (i \cup N_{iK} )$ as all units in $V$ that are outside of $i$'s $K$-neighborhood.
Note that the sets
$\{i\}, \{\mathcal{N}_{ik}\}, \{\mathcal{N}_{-ik}\}$ form a partition of $V$.

Let 
 $\mathbf{W} = (W_{i}, \mathbf{W}_{\mathcal{N}_{iK}}, \mathbf{W}_{\mathcal{N}_{-iK}})$ denote treatment assignment vector for all units $N$, partitioned into the treatment given to unit $i$, the treatments given to $i$'s $K$-nearest neighbors, and the treatments given to all other units.
For clarity, treatment statuses in $\mathbf{W}_{\mathcal{N}_{iK}}$ and  $\mathbf{W}_{\mathcal{N}_{-iK}}$ are given in increasing order with respect to $d(i,j)$ 
(\textit{e.g.}~the first entry of $ \mathbf{W}_{\mathcal{N}_{iK}}$ is the treatment assignment given to the nearest neighbor of unit $i$).
The defining assumption of KNNIM is as follows:  
\begin{assumption}
\label{Assumption1}
    ($K$-Neighborhood Interference Assumption (K-NIA)). 
The potential outcomes $y_i(\mathbf W)$ for all units $i$ satisfy
\begin{equation} 
y_{i}(W_{i}, \mathbf{W}_{\mathcal{N}_{iK}}, \mathbf{W}_{\mathcal{N}_{-iK}}) = y_{i}(W_{i}, \mathbf{W}_{\mathcal{N}_{iK}}, \mathbf{W}'_{\mathcal{N}_{-iK}}).
\end{equation}
\end{assumption}
In words, K-NIA ensures that the potential outcome of unit $i$ is only affected by its own treatment status and the treatment statuses of its $K$-nearest neighbors.
The treatment statuses of units outside the $K$-neighborhood will not affect the potential outcome of unit $i$.
Thus, this model restricts the number of potential outcomes to be $2^{K+1}$ for each unit. 
For brevity, we will suppress the treatment assignment outside of the $K$-nearest neighbors when denoting potential outcomes under KNNIM:
$y_{i}(W_{i}, \mathbf{W}_{\mathcal{N}_{iK}}) = y_{i}(W_{i}, \mathbf{W}_{\mathcal{N}_{iK}}, \mathbf{W}_{\mathcal{N}_{-iK}})$.

The choice of $K$ is ultimately left to the researcher, though large values of $K$ may not be sufficiently restrictive to allow for reliable estimates and inferences.
A detailed description on how to choose $K$ is given in~\citet{alzubaidi2022detecting}.
As a crude, but useful, rule-of-thumb, we find it practical to choose $K$ such that each exposure used in the estimation of treatment effects is observed at least 30 times.

\section{Causal Estimands under KNNIM}
\label{makereference3.4}

Using the potential outcomes framework and following \citet{hudgens2008toward}, we now define causal estimands under KNNIM. 
We start with general definitions of direct and indirect effects and conclude with KNNIM-specific nearest neighbors effects.

\subsection{Direct, Indirect, and Total Effects}
\label{section4.1}
    
The \textit{average direct effect} (ADE) $\delta_{dir}$ is the average difference in a unit's potential outcomes when changing that unit's treatment status and holding all other units' treatment status fixed.
It may be defined as
\begin{equation}
\label{eq:direff}
\delta_{dir} = \frac{1}{N} \sum_{i = 1}^N (y_i(1,\mathbf{1}) - y_i(0,\mathbf{1})),
\end{equation}
where $\mathbf{1}$ denotes a vector of 1's of length $K$.
In contrast to direct effect, the \textit{average indirect effect} (AIE) $\delta_{ind}$ is defined as the average difference in a unit's potential outcome when changing all other treatment statuses from control to treated, holding its own treatment fixed.
It may be defined as
\begin{equation}
\label{eq:indeff}
\delta_{ind} = \frac{1}{N} \sum_{i = 1}^N (y_i(0,\mathbf{1}) - y_i(0,\mathbf{0})),
\end{equation}
where $\mathbf{0}$ denotes a vector of 0's of length $K$.
The \textit{average total effect} (ATOT) $\delta_{tot}$ measures the average difference in potential outcomes between all units receiving treatment and all units receiving control:
\begin{equation}
\delta_{tot} = \frac{1}{N} \sum_{i = 1}^N (y_i(1,\mathbf{1}) - y_i(0,\mathbf{0})). 
\end{equation}
Note that these quantities are defined to satisfy
\begin{equation}
    \label{eq:trtdecomp}
    \delta_{tot} = \delta_{dir} + \delta_{ind}.
\end{equation}
When SUTVA holds, $\delta_{tot} = \delta_{dir}$ and $\delta_{ind} = 0$.

Note, we may define $\delta_{dir} = N^{-1}\sum_{i=1}^N (y_i(1, \mathbf{0}) - y_i(0, \mathbf{0}))$ and $\delta_{ind} = N^{-1}\sum_{i=1}^N (y_i(1, \mathbf{1}) - y_i(1, \mathbf{0}))$ while still ensuring that~\eqref{eq:trtdecomp} holds.  
These quantities may differ from~\eqref{eq:direff} and~\eqref{eq:indeff} if there is interaction between direct effects and indirect effects---that is, if the differences $y_i(1,\mathbf W_{-i}) - y_i(0,\mathbf W_{-i})$ differ depending on the allocation of treatment given to $\mathbf W_{-i}$~\citep{alzubaidi2022detecting}.


\subsection{The $\ell^{th}$--Nearest Neighbor Indirect Effect}

We now define nearest-neighbor average treatment effects.
Under KNNIM, these estimands are of primary researcher interest.
Let $\mathbf{W}^{*}_{\ell} = (W^*_{\ell, 1}, W^*_{\ell, 2}, \ldots, W^*_{\ell, K})
\in\{0,1\}^K$ 
denote the treatment assignment vector of length $K$ where the first $\ell$ nearest neighbors are given treatment and the rest are given control:
\begin{equation}
    \mathbf{W}^*_{\ell, j} = \left\{ 
    \begin{array}{ll}
        1, & j \leq \ell,\\
        0, &\text{otherwise}.
    \end{array}\right.
\end{equation}
Note that $\mathbf{W}^{*}_{K} = \mathbf 1$ and $\mathbf{W}^{*}_{0} = \mathbf{0}$. 
Following our definitions from Section~\ref{section4.1}, we define the \textit{average $\ell^{th}$--nearest neighbor indirect effect} (A$\ell$NNIE) as
\begin{equation}
    \delta_{\ell} = \frac{1}{N} \sum_{i = 1}^N (y_i(0,\mathbf{W}^{*}_{\ell}) - y_i(0,\mathbf{W}^{*}_{\ell-1} )).
\end{equation}
Note that $\mathbf{W}^{*}_{\ell}$ and $\mathbf{W}^{*}_{\ell-1}$ are identical except 
that $W^*_{\ell, \ell} = 1$ and $W^*_{\ell-1, \ell} = 0$.
Hence, $\delta_{\ell}$ may be interpreted as the average difference in response due to the treatment status of the $\ell^{th}$--nearest-neighbor.
Additionally, under KNNIM, the AIE is the sum of the A$\ell$NNIEs.
\begin{equation} 
\delta_{ind} = \sum_{\ell = 1}^{K} \delta_{\ell}.
\end{equation}
To see this, note that $\sum_{\ell = 1}^{K} \delta_{\ell}$ is a telescoping sum that simplifies to
\begin{equation}
    \sum_{\ell = 1}^{K} \delta_{\ell} =  \sum_{\ell = 1}^{K}\frac{1}{N} \left(\sum_{i = 1}^N y_i(0,\mathbf{W}^{*}_{\ell}) - \sum_{i=1}^Ny_i(0,\mathbf{W}^{*}_{\ell-1} )\right) = \frac{1}{N} \sum_{i = 1}^N (y_i(0,\mathbf{1}) - y_i(0,\mathbf{0} )) = \delta_{ind}.
\end{equation}

\section{Horvitz–Thompson Estimators}
\label{makereference3.5}

We now derive Horvitz–Thompson (HT) estimators for the estimands described in Section~\ref{makereference3.4}. 
Our approach closely follows that in~\citet{aronow2017estimating}.
Of particular note, these HT estimators require computing, for each unit $i$, various marginal and joint probabilities related to the treatment allocation on $i\cup \mathcal N_{iK}$.
Thankfully, for many common designs, these probabilities can be computed exactly under KNNIM, and we give closed-form solutions for these probabilities under completely-randomized and Bernoulli-randomized designs.
When these probabilities cannot be computed exactly, they may still be estimated, for example, using the approach in~\citet{aronow2017estimating}.

Let $\pi_i(W,\mathbf{W}_{\mathcal{N}_{K}})$ denote the marginal probability that unit $i$ is given exposure $(W,\mathbf{W}_{\mathcal{N}_{K}})$---that is, the overall treatment allocation assigns treatment $W$ to unit $i$ and assigns treatment conditions $\mathbf{W}_{\mathcal{N}_{K}}$ to $i$'s $K$-neighborhood. 
Define $\pi_{ij}(W,\mathbf{W}_{\mathcal{N}_{K}})$ as the joint probability that units $i$ and $j$ are both given exposure $(W,\mathbf{W}_{\mathcal{N}_{K}})$ and define $\pi_{ij}((W,\mathbf{W}_{\mathcal{N}_{K}}), (W',\mathbf{W}'_{\mathcal{N}_{K}}))$ as the joint probability  that unit $i$ receives exposure $(W,\mathbf{W}_{\mathcal{N}_{K}})$ and unit $j$ receives exposure $(W',\mathbf{W}'_{\mathcal{N}_{K}})$.
Define indicator variables $I_{i}(W,\mathbf{W}_{\mathcal{N}_{K}})$ that are equal to 1 if unit $i$ is given exposure $(W,\mathbf{W}_{\mathcal{N}_{K}})$ and are 0 otherwise.
Finally, let 
\begin{equation}
    \bar y(W,\mathbf{W}_{\mathcal{N}_{K}}) = \frac{1}{N} \sum_{i=1}^N y_i(W,\mathbf{W}_{\mathcal{N}_{K}})
\end{equation}
denote the average potential outcome across all $N$ units for treatment allocation $(W,\mathbf{W}_{\mathcal{N}_{K}})$.
The Horvitz-Thompson (HT) estimator~\citep{horvitz1952generalization} for $ \bar y(W,\mathbf{W}_{\mathcal{N}_{K}})$ is
\begin{equation} 
 \bar{Y}_{HT}^{obs}(W,\mathbf{W}_{\mathcal{N}_{K}}) =  \frac{1}{N}\sum_{i = 1}^{N}  I_{i}(W,\mathbf{W}_{\mathcal{N}_{K}})\frac{Y_i}{\pi_{i}(W,\mathbf{W}_{\mathcal{N}_{K}})}.
\end{equation}
This estimator is unbiased for $\bar y(W,\mathbf{W}_{\mathcal{N}_{K}})$,
\begin{equation} 
\label{EXP}
\mathbf{E}(\bar{Y}_{HT}^{obs}(W,\mathbf{W}_{\mathcal{N}_{K}})) 
 =  \bar{y}(W,\mathbf{W}_{\mathcal{N}_{K}}),
\end{equation}
and has variance 
\begin{multline}\label{VAR}
\mathbf{Var}(\bar{Y}_{HT}^{obs}(W,\mathbf{W}_{\mathcal{N}_{K}})) 
 =  \frac{1}{N^2}\sum_{i = 1}^{N}\pi_{i}(W,\mathbf{W}_{\mathcal{N}_{K}})[1 - \pi_{i}(W,\mathbf{W}_{\mathcal{N}_{K}})]\left[ \frac{y_{i}(W,\mathbf{W}_{\mathcal{N}_{K}})}{\pi_{i}(W,\mathbf{W}_{\mathcal{N}_{K}})}\right]^2\\
  + \frac{1}{N^2}\sum_{i = 1}^{N}\sum_{j \ne i}[\pi_{ij}(W,\mathbf{W}_{\mathcal{N}_{K}}) - \pi_{i}(W,\mathbf{W}_{\mathcal{N}_{K}})\pi_{j}(W,\mathbf{W}_{\mathcal{N}_{K}})]\frac{y_{i}(W,\mathbf{W}_{\mathcal{N}_{K}})}{\pi_{i}(W,\mathbf{W}_{\mathcal{N}_{K}})}\frac{y_{j}(W,\mathbf{W}_{\mathcal{N}_{K}})}{\pi_{j}(W,\mathbf{W}_{\mathcal{N}_{K}})}.
\end{multline}
The covariance of the HT estimators under any two exposures  $(W,\mathbf{W}_{\mathcal{N}_{K}})$ and $(W',\mathbf{W}'_{\mathcal{N}_{K}})$ is 
\begin{multline}\label{COV}
\mathbf{Cov}(\bar{Y}_{HT}^{obs}(W,\mathbf{W}_{\mathcal{N}_{K}}),\bar{Y}_{HT}^{obs}(W',\mathbf{W}'_{\mathcal{N}_{K}})) =
 \\\frac{1}{N^2}\left(\sum_{i = 1}^{N}\sum_{j \ne i}\left[\pi_{ij}((W,\mathbf{W}_{\mathcal{N}_{K}}),(W',\mathbf{W}'_{\mathcal{N}_{K}})) -
 \pi_{i}(W,\mathbf{W}_{\mathcal{N}_{K}})\pi_{j}(W',\mathbf{W}'_{\mathcal{N}_{K}})\right] \right.
 \\ \times
\left. \frac{y_{i}(W,\mathbf{W}_{\mathcal{N}_{K}})}{\pi_{i}(W,\mathbf{W}_{\mathcal{N}_{K}})}
 \frac{y_{j}(W',\mathbf{W}'_{\mathcal{N}_{K}})}{\pi_{j}(W',\mathbf{W}'_{\mathcal{N}_{K}})} \right)
 - \frac{1}{N^2}\sum_{i = 1}^{N}  y_{i}(W,\mathbf{W}_{\mathcal{N}_{K}})y_{i}(W',\mathbf{W}'_{\mathcal{N}_{K}})
.
\end{multline}

From~\eqref{EXP}, \eqref{VAR}, and \eqref{COV}, the expectation and variance for the difference in HT estimators for the average response under any two unique exposures $(W,\mathbf{W}_{\mathcal{N}_{K}})$, $(W',\mathbf{W}'_{\mathcal{N}_{K}})$ can be computed as follows:
\begin{theorem}\label{Theorem1}
\begin{equation} 
\mathbf{E}(\bar{Y}_{HT}^{obs}(W,\mathbf{W}_{\mathcal{N}_{K}}) - \bar{Y}_{HT}^{obs}(W',\mathbf{W'}_{\mathcal{N}_{K}})) =   \bar y(W,\mathbf{W}_{\mathcal{N}_{K}}) -   \bar y(W',\mathbf{W'}_{\mathcal{N}_{K}}),
\end{equation}
\begin{multline}
\label{VARDIF}
\mathbf{Var}(\bar{Y}_{HT}^{obs}(W,\mathbf{W}_{\mathcal{N}_{K}}) - \bar{Y}_{HT}^{obs}(W',\mathbf{W'}_{\mathcal{N}_{K}}))=\\
\mathbf{Var}(\bar{Y}_{HT}^{obs}(W,\mathbf{W}_{\mathcal{N}_{K}})) + \mathbf{Var}(\bar{Y}_{HT}^{obs}(W',\mathbf{W}'_{\mathcal{N}_{K}})) 
-2\mathbf{Cov}(\bar{Y}_{HT}^{obs}(W,\mathbf{W}_{\mathcal{N}_{K}}),\bar{Y}_{HT}^{obs}(W',\mathbf{W}'_{\mathcal{N}_{K}})).
\end{multline}
\end{theorem}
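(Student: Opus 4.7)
The plan is to derive both claims directly from the identities already established in~\eqref{EXP}, \eqref{VAR}, and~\eqref{COV}, viewing $\bar{Y}_{HT}^{obs}(W,\mathbf{W}_{\mathcal{N}_{K}})$ and $\bar{Y}_{HT}^{obs}(W',\mathbf{W}'_{\mathcal{N}_{K}})$ as two random variables defined on the common probability space induced by the randomization of $\mathbf W'$. Under this view, both assertions reduce to mechanical applications of standard properties of expectation and variance, so no new probabilistic machinery is required.

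For the expectation claim, I would apply linearity of expectation to split $\mathbf{E}(\bar{Y}_{HT}^{obs}(W,\mathbf{W}_{\mathcal{N}_{K}}) - \bar{Y}_{HT}^{obs}(W',\mathbf{W}'_{\mathcal{N}_{K}}))$ into the difference $\mathbf{E}(\bar{Y}_{HT}^{obs}(W,\mathbf{W}_{\mathcal{N}_{K}})) - \mathbf{E}(\bar{Y}_{HT}^{obs}(W',\mathbf{W}'_{\mathcal{N}_{K}}))$ and then invoke the unbiasedness identity~\eqref{EXP} on each summand. This yields $\bar y(W,\mathbf{W}_{\mathcal{N}_{K}}) - \bar y(W',\mathbf{W}'_{\mathcal{N}_{K}})$ in a single line.

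For the variance claim, I would apply the standard bilinearity identity $\mathbf{Var}(X-Y) = \mathbf{Var}(X) + \mathbf{Var}(Y) - 2\mathbf{Cov}(X,Y)$, which holds for any two square-integrable random variables on the same probability space. Since both HT estimators are deterministic functions of the single random assignment vector $\mathbf W'$, this identity applies verbatim with $X = \bar{Y}_{HT}^{obs}(W,\mathbf{W}_{\mathcal{N}_{K}})$ and $Y = \bar{Y}_{HT}^{obs}(W',\mathbf{W}'_{\mathcal{N}_{K}})$, giving~\eqref{VARDIF} immediately. If a more explicit formula is desired, one can then substitute the expressions in~\eqref{VAR} for each variance term and~\eqref{COV} for the covariance term.

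There is no substantive obstacle in this proof; the only point that merits care is notational. In particular, the hypothesis that the two exposures are \emph{unique} (i.e., $(W,\mathbf{W}_{\mathcal{N}_{K}}) \neq (W',\mathbf{W}'_{\mathcal{N}_{K}})$) ensures that the indicator variables $I_i(W,\mathbf{W}_{\mathcal{N}_{K}})$ and $I_i(W',\mathbf{W}'_{\mathcal{N}_{K}})$ are mutually exclusive for each fixed $i$, which is what forces the appearance of the residual term $-N^{-2}\sum_{i=1}^N y_i(W,\mathbf{W}_{\mathcal{N}_{K}})\,y_i(W',\mathbf{W}'_{\mathcal{N}_{K}})$ in~\eqref{COV} (capturing the $i=j$ contribution that is absent from the double sum). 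Flagging this exclusion argument explicitly makes the derivation self-contained, after which the proof is complete.
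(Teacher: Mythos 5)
Your proposal is correct and follows exactly the route the paper takes: the paper presents Theorem~\ref{Theorem1} as an immediate consequence of linearity of expectation applied to~\eqref{EXP} and the standard identity $\mathbf{Var}(X-Y)=\mathbf{Var}(X)+\mathbf{Var}(Y)-2\mathbf{Cov}(X,Y)$ applied to~\eqref{VAR} and~\eqref{COV}, with no further argument given. Your additional remark on the mutual exclusivity of the indicators under unique exposures correctly explains the residual term in~\eqref{COV}, but that term is part of the already-established covariance formula rather than a step needed for the theorem itself.
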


We can then obtain unbiased estimators for the ADE, AIE, ATOT, and the A$\ell$NNIE as follows:
\begin{align} 
\widehat{\delta}_{{HT,dir}} &= \bar{Y}_{HT}^{obs}(1,\mathbf{1}) - \bar{Y}_{HT}^{obs}(0,\mathbf{1}),\\
\widehat{\delta}_{{HT,ind}} &= \bar{Y}_{HT}^{obs}(0,\mathbf{1}) - \bar{Y}_{HT}^{obs}(0,\mathbf{0}), \\
\widehat{\delta}_{{HT,tot}}  &= \bar{Y}_{HT}^{obs}(1,\mathbf{1}) - \bar{Y}_{HT}^{obs}(0,\mathbf{0}),\\
\widehat{\delta}_{HT,\ell} &= \bar{Y}_{HT}^{obs}(0,\mathbf W^{*}_{\ell}) - \bar{Y}_{HT}^{obs}(0,\mathbf W^{*}_{\ell-1}).
\end{align} 
Variances for these estimators are derived as in~\eqref{VARDIF}.
As with the estimands, we have the following relationships between these estimators:
\begin{align}
\label{sumofdirectindirectCH3.1}
\widehat{\delta}_{HT,tot} &= \widehat{\delta}_{HT,dir} + \widehat{\delta}_{HT,ind}, \\
\label{sumofdirectindirectCH3.2}
\widehat{\delta}_{HT,ind} &= \sum_{\ell = 1}^K \widehat{\delta}_{HT,\ell} .
\end{align}
\subsection{Marginal and Joint Exposure Probabilities \label{probcompute}}

One significant benefit of KNNIM is that this model allows for closed-form expressions 
of the marginal and joint exposure probabilities for many common experimental designs.
We now provide these exposure probabilities under completely-randomized and Bernoulli-randomized designs. 

We first introduce some notation to facilitate this discussion.
For any two units $i$ and $j$, let $b_{ij}$ denote the number of units shared by both $i$ and $j$'s closed $K$-neighborhoods:
$b_{ij} =  |(i \cup \mathcal N_{iK})\cap (j \cup \mathcal N_{jK})|$.
Consider two (not necessarily unique) treatment exposures $(W, \mathbf W_{\mathcal{N}_{K}})$, $(W', \mathbf W'_{\mathcal{N}_{K}})$ given to the closed $K$-neighborhood of units $i$ and $j$ respectively.
These exposures are called \textit{compatible} if they can co-occur within a given treatment assignment.
For example, if $j$ is unit $i$'s nearest neighbor, the exposures $(1,\mathbf 1)$ for unit $i$ and $(0, \mathbf 0)$ for unit $j$ are \textit{incompatible} since unit $j$ is given treatment in the first exposure and control in the second exposure---the probability that $i$ and $j$ can jointly observe two incompatible treatment exposures is 0.
For the exposure $(W, \mathbf W_{\mathcal{N}_{K}})$, let $N_{itK}$ and $N_{icK}$ denote the number of treated and control units respectively within $i \cup \mathcal N_{iK}$. 
For exposure $(W', \mathbf W'_{\mathcal{N}_{K}})$, let $N_{jitK}$ and $N_{jicK}$ denote the number of treated and control units in $(j \cup \mathcal N_{jK}) \setminus (i \cup \mathcal N_{iK}$)---that is, the number of treated and control units for units not already included in $i$'s closed neighborhood.
The specific units given treatment or control do not factor into the probability computations.



\subsubsection{Exposure Probabilities Under Complete Randomization}

In a completely-randomized design, the number of treated units $N_t$ in the study is selected prior to randomization.  
Each possible treatment has the same $\binom{N}{N_t}^{-1}$ probability of occurring.

The marginal probability that  $i \cup \mathcal N_{iK}$ receives exposure $(W, \mathbf W_{\mathcal{N}_{K}})$ is
\begin{equation}\label{CRDpi} 
\pi_{i}(W, \mathbf W_{\mathcal{N}_{K}}) = \frac{\binom {N-K-1} {N_{t}-N_{itK}}}{ \binom N{N_{t}}}.
\end{equation}
The joint probability that unit $i$ receives exposure $(W, \mathbf W_{\mathcal{N}_{K}})$ and unit $j$ receives exposure  $(W', \mathbf W'_{\mathcal{N}_{K}})$ is
\begin{multline}\label{JointCRD}
\pi_{ij}((W, \mathbf W_{\mathcal{N}_{K}}), (W', \mathbf W'_{\mathcal{N}_{K}})) =\\
\begin{cases}
      \frac{\binom {N-K-1} {N_{t}-N_{tK}}}{\binom N{N_{t}}} \frac{\binom {N-2K-2+b_{ij}} {N_{t}-N_{itK}-N_{jitK}}}{\binom {N-K-1} {N_{t}-N_{itK}}}, & (W, \mathbf W_{\mathcal{N}_{K}}),(W', \mathbf W'_{\mathcal{N}_{K}})~\text{are compatible for } i \text{ and } j, \\
       0, & \text{otherwise}.
    \end{cases}
\end{multline}

\subsubsection{Exposure Probabilities Under Bernoulli Randomization}

In a Bernoulli-randomized design, each unit has a pre-specified probability $p$ of being assigned treatment, and treatments are assigned independently across units (\textit{e.g.}~treatment assignment is determined for each unit by flipping a coin that has probability $p$ of landing heads).
Under Bernoulli-randomization, the marginal probability that  $i \cup \mathcal N_{iK}$ receives exposure $(W, \mathbf W_{\mathcal{N}_{K}})$ is
\begin{equation}\label{BRpi}
    \pi_{i}(W, \mathbf W_{\mathcal{N}_{K}}) = p^{N_{itK}}(1-p)^{N_{icK}},
\end{equation}
and joint probability of units $i$ and $j$ being exposed to  $(W, \mathbf W_{\mathcal{N}_{K}})$ and $(W', \mathbf W'_{\mathcal{N}_{K}})$ respectively is
\begin{multline}\label{JointBR}
\pi_{ij}((W_{i},W_{\mathcal{N}_{ik}}),(W'_{j},W'_{\mathcal{N}_{jk}})) 
 = \\
\begin{cases}
      p^{N_{itK}}(1-p)^{N_{icK}}p^{N_{jitK}}(1-p)^{N_{jicK}},  & (W, \mathbf W_{\mathcal{N}_{K}}),(W', \mathbf W'_{\mathcal{N}_{K}})~\text{are compatible for } i \text{ and } j, \\
       0, & \text{otherwise}.
    \end{cases}
\end{multline}

\section{Estimation Under No Weak Interaction Between Direct and Indirect Effects}
\label{makereference3.6}

Next, we consider an additional assumption on the responses of potential outcomes---\textit{no weak interaction between direct and indirect effects}.  
When it holds, this assumption allows for significantly more powerful estimators of direct and indirect effects and makes the two competing definitions of direct and indirect effects in Section~\ref{section4.1} equivalent.
Note, this assumption is similar to the additivity of main effects (AME) assumption considered by~\citet{sussman2017elements}, however, we strictly weaken this assumption by only requiring it to hold for the average of the potential outcomes.

\begin{assumption}\label{Assumption2}
(No Weak Interaction Between Direct and Indirect Effects) 
There is no weak interaction between direct and indirect effects if, for any two potential treatment exposures $\mathbf{W}_{\mathcal{N}_{K}}, \mathbf{W}'_{\mathcal{N}_{K}}$ on the $K$-neighborhoods
\begin{equation}
\label{noweakint}
\bar y(1,\mathbf{W}_{\mathcal{N}_{K}}) - \bar y (1,\mathbf{W}'_{\mathcal{N}_{K}}) -
(\bar y(0,\mathbf{W}_{\mathcal{N}_{K}}) - \bar y(0,\mathbf{W}'_{\mathcal{N}_{K}})) = 0.
\end{equation}
\end{assumption}

While strong, Assumption~\ref{Assumption2} may be plausible under certain settings. 
One such setting could occur if direct and indirect effects of treatment are multiplicative---i.e.~treatment effects may increase response by a certain percentage.  
In this case, taking the log of response may lead to model in which the stronger AME assumption holds.
For example, under KNNIM, we may consider the following model of response:
\begin{equation*}
        Y_i  = Y(0, \mathbf 0)\beta_t^{W_i}\prod_{\ell=1}^K \beta_{\ell}^{W_{i\ell}},
\end{equation*}
where $W_{i\ell}$ indicates whether the $\ell$th nearest neighbor of unit $i$ receives treatment.
After taking logs of both sides, we obtain
\begin{equation*}
        \log Y_i  = Y(0, \mathbf 0) + W_i\log \beta_t + \sum_{\ell=1}^K W_{i\ell} \log \beta_{\ell},
\end{equation*}
thereby satisfying AME---the direct effect is always $\log(\beta_t)$ regardless of the treatment status of any other unit.

Assumption~\ref{Assumption2} may also hold when direct treatment effects are weak relative to the indirect effects. 
For example, consider an experiment on a social network designed to assess the efficacy of an advertisement for an event.  
Treatment may involve posting a flyer on a user's ``wall'' if that user had previously expressed interest in attending that event, and response would indicate whether that user attends the event.
Since the user is already interested in the event, the user may not be particularly affected by the flyer on their wall, suggesting a weak direct treatment effect.  
However, if the user notices that their close friends also have flyers on their walls, they all may decide to attend the event together, suggesting a strong indirect effect.


To demonstrate the power of Assumption~\ref{Assumption2}, consider the case of estimating the AIE $\delta_{ind}$.  
When Assumption~\ref{Assumption2} holds,  both  $\bar{Y}_{HT}^{obs}(1,\mathbf{1}) - \bar{Y}_{HT}^{obs}(1,\mathbf{0})$ and $\bar{Y}_{HT}^{obs}(0,\mathbf{1}) - \bar{Y}_{HT}^{obs}(0,\mathbf{0})$ unbiasedly estimate $\delta_{ind}$---the latter quantity unbiasedly estimates $\bar{y}(0,\mathbf{1}) - \bar{y}(0,\mathbf{0})$, and under  Assumption~\ref{Assumption2},  $\bar{y}(0,\mathbf{1}) - \bar{y}(0,\mathbf{0}) = \bar{y}(1,\mathbf{1}) - \bar{y}(1,\mathbf{0}) = \delta_{ind}$. 
Thus, these two estimators can be combined to yield a more powerful unbiased estimate of the AIE:
\begin{equation}
    \hat\delta_{ind} = c_1\left(\bar{Y}_{HT}^{obs}(1,\mathbf{1}) - \bar{Y}_{HT}^{obs}(1,\mathbf{0})\right) + c_2 \left(\bar{Y}_{HT}^{obs}(0,\mathbf{1}) - \bar{Y}_{HT}^{obs}(0,\mathbf{0})\right)
\end{equation}
where the weights $c_1, c_2$ satisfy $c_1 + c_2 = 1$.  
An analogous approach can be used to derive estimators for the ADE and the A$\ell$NNIEs.

Given strong prior knowledge about the values of the potential outcomes, $c_1$ and $c_2$ may be adjusted to reduce the variance of the estimator (similar to Neyman allocation or optimal allocation in \citet{lohr2019sampling}).
However, when no prior knowledge is available, setting $c_1 = c_2 = 1/2$ may be preferable.
Additionally, under this choice of weights, 
estimators can be decomposed analogously to~\eqref{sumofdirectindirectCH3.1} and~\eqref{sumofdirectindirectCH3.2}.
Thus, we proceed assuming $c_1 = c_2 = 1/2$.

\subsection{Estimators Under the no Weak Interaction Assumption}

We begin by proposing unbiased estimators for the average direct, indirect, and $\ell^{th}$ nearest neighbor effects:
\begin{theorem}
Under Assumption~\ref{Assumption2}, the following are unbiased estimators for the ADE, AIE, and the A$\ell$NNIE:
\begin{align} 
    \widehat{\delta^*}_{{HT,dir}} &= 1/2\left(\bar{Y}_{HT}^{obs}(1,\mathbf{1}) - \bar{Y}_{HT}^{obs}(0,\mathbf{1})\right) + 1/2\left(\bar{Y}_{HT}^{obs}(1,\mathbf{0}) - \bar{Y}_{HT}^{obs}(0,\mathbf{0})\right) ,\\
    \widehat{\delta^*}_{{HT,ind}} &= 1/2\left(\bar{Y}_{HT}^{obs}(1,\mathbf{1}) - \bar{Y}_{HT}^{obs}(1,\mathbf{0})\right) + 1/2\left(\bar{Y}_{HT}^{obs}(0,\mathbf{1}) - \bar{Y}_{HT}^{obs}(0,\mathbf{0})\right) , \\
    \widehat{\delta^*}_{HT,\ell} &= 1/2\left(\bar{Y}_{HT}^{obs}(1,\mathbf W^{*}_{\ell}) - \bar{Y}_{HT}^{obs}(1,\mathbf W^{*}_{\ell-1})\right) +1/2\left(\bar{Y}_{HT}^{obs}(0,\mathbf W^{*}_{\ell}) - \bar{Y}_{HT}^{obs}(0,\mathbf W^{*}_{\ell-1})\right).
\end{align} 
\end{theorem}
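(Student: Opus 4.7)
The plan is to prove each of the three unbiasedness claims by the same three-step argument. First, I apply linearity of expectation to each $\widehat{\delta^*}$. Second, I replace every $\mathbf{E}[\bar{Y}_{HT}^{obs}(W,\mathbf{W}_{\mathcal{N}_{K}})]$ with $\bar{y}(W,\mathbf{W}_{\mathcal{N}_{K}})$ via equation~\eqref{EXP}, reducing the problem to a deterministic identity among potential-outcome averages. Third, I invoke Assumption~\ref{Assumption2} to show that each of the two ``halves'' of $\widehat{\delta^*}$ already equals the target estimand, so their average does too.

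The useful observation about Assumption~\ref{Assumption2} is that equation~\eqref{noweakint} is symmetric and admits two equivalent rearrangements:
\begin{equation}
\bar{y}(1,\mathbf{W}_{\mathcal{N}_{K}}) - \bar{y}(0,\mathbf{W}_{\mathcal{N}_{K}}) = \bar{y}(1,\mathbf{W}'_{\mathcal{N}_{K}}) - \bar{y}(0,\mathbf{W}'_{\mathcal{N}_{K}}),
\end{equation}
that is, the direct contrast is invariant in the neighborhood exposure; and
\begin{equation}
\bar{y}(1,\mathbf{W}_{\mathcal{N}_{K}}) - \bar{y}(1,\mathbf{W}'_{\mathcal{N}_{K}}) = \bar{y}(0,\mathbf{W}_{\mathcal{N}_{K}}) - \bar{y}(0,\mathbf{W}'_{\mathcal{N}_{K}}),
\end{equation}
that is, the indirect contrast between two fixed neighborhood exposures is invariant in the unit's own treatment. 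These are the same statement, but applied in different ways to the three estimators.

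For $\widehat{\delta^*}_{HT,dir}$, after steps (i)--(ii) the expectation becomes $\tfrac{1}{2}(\bar{y}(1,\mathbf{1}) - \bar{y}(0,\mathbf{1})) + \tfrac{1}{2}(\bar{y}(1,\mathbf{0}) - \bar{y}(0,\mathbf{0}))$; the first half equals $\delta_{dir}$ by~\eqref{eq:direff}, and the first reading of Assumption~\ref{Assumption2} with $\mathbf{W}_{\mathcal{N}_{K}} = \mathbf{1}$ and $\mathbf{W}'_{\mathcal{N}_{K}} = \mathbf{0}$ shows the second half equals it too. For $\widehat{\delta^*}_{HT,ind}$, the expectation is $\tfrac{1}{2}(\bar{y}(1,\mathbf{1}) - \bar{y}(1,\mathbf{0})) + \tfrac{1}{2}(\bar{y}(0,\mathbf{1}) - \bar{y}(0,\mathbf{0}))$; the second half equals $\delta_{ind}$ by~\eqref{eq:indeff}, and the second reading of Assumption~\ref{Assumption2} with the same neighborhood pair shows the first half equals $\delta_{ind}$ as well. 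For $\widehat{\delta^*}_{HT,\ell}$, the second reading applied with $\mathbf{W}_{\mathcal{N}_{K}} = \mathbf{W}^*_{\ell}$ and $\mathbf{W}'_{\mathcal{N}_{K}} = \mathbf{W}^*_{\ell-1}$ makes both halves equal $\delta_\ell$ by the definition of the A$\ell$NNIE.

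There is essentially no hard step here: the argument is a direct expansion using previously established facts (equation~\eqref{EXP} and the definitions of $\delta_{dir}$, $\delta_{ind}$, and $\delta_\ell$) combined with the single hypothesis Assumption~\ref{Assumption2}. The only point to be careful about is that the assumption is genuinely invoked in each case --- in the ADE claim to equate direct contrasts across two neighborhood exposures, and in the AIE and A$\ell$NNIE claims to equate indirect contrasts across the two values of the unit's own treatment. Writing the rearrangements out explicitly, as above, makes each application transparent.
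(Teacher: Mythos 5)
Your proposal is correct and follows essentially the same route as the paper: the paper likewise reduces unbiasedness to the identity $\mathbf{E}[\bar{Y}_{HT}^{obs}(W,\mathbf{W}_{\mathcal{N}_{K}})] = \bar{y}(W,\mathbf{W}_{\mathcal{N}_{K}})$ and then applies Assumption~\ref{Assumption2} in its two rearranged forms --- the ``direct-contrast'' form for $\widehat{\delta^*}_{HT,dir}$ and the ``indirect-contrast'' form for $\widehat{\delta^*}_{HT,ind}$ and $\widehat{\delta^*}_{HT,\ell}$ --- to show each half of the weighted average already equals the target estimand. Your write-up is simply a more explicit version of the paper's argument.
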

\begin{proof}
Earlier in this section, we detailed the proof of unbiasedness for   $\widehat{\delta^*}_{{HT,ind}}$. 
The argument for unbiasedness of $\widehat{\delta^*}_{HT,\ell}$ proceeds identically, as does that for $\widehat{\delta^*}_{HT,dir}$ after rewriting~\eqref{noweakint} as
\begin{equation}
    \bar y(1,\mathbf{W}_{\mathcal{N}_{K}})  -
    \bar y(0,\mathbf{W}_{\mathcal{N}_{K}})   = \bar y (1,\mathbf{W}'_{\mathcal{N}_{K}}) - \bar y(0,\mathbf{W}'_{\mathcal{N}_{K}}).
\end{equation}
\end{proof}
 For brevity, to distinguish between these estimators and the estimators defined in Section~\ref{makereference3.5}, we will refer to the estimators $\widehat{\delta}_{{HT,dir}}$, $\widehat{\delta}_{{HT,ind}}$, and, $\widehat{\delta}_{HT,\ell}$ as estimators ``under Assumption 1'' and $\widehat{\delta}^*_{{HT,dir}}$, $\widehat{\delta}^*_{{HT,ind}}$, and, $\widehat{\delta}^*_{HT,\ell}$ as ``under Assumption 2,'' even though both estimators can be used when Assumption~\ref{Assumption2} holds.

These estimators satisfy a similar decomposition to those derived under Assumption~\ref{Assumption1}:
\begin{align}
\label{sumofdirectindirectCH36.1}
\widehat{\delta}_{HT,tot} &= \widehat{\delta^*}_{HT,dir} + \widehat{\delta^*}_{HT,ind}, \\
\label{sumofdirectindirectCH36.2}
\widehat{\delta^*}_{HT,ind} &= \sum_{\ell = 1}^K \widehat{\delta^*}_{HT,\ell} .
\end{align}
Additionally, the variance of $\widehat{\delta^*}_{{HT,dir}}$, $\widehat{\delta^*}_{HT,ind}$, and $\widehat{\delta^*}_{HT,\ell}$ can be derived from repeated application of~\eqref{VAR} and~\eqref{COV}.  
This result generalizes to similar estimators using any four treatment exposures.
\begin{theorem}\label{Theorem6} 
For any four treatment exposures $(W, \mathbf W_{\mathcal{N}_{K}}), (W', \mathbf W'_{\mathcal{N}_{K}}), (W^{*}, \mathbf W^{*}_{\mathcal{N}_{K}}), (W^{*\prime}, \mathbf W^{*\prime}_{\mathcal{N}_{K}})$,
\begin{align}
 \label{varianceass2directCH3}
&\mathbf{Var} \left[1/2\left(\bar{Y}_{HT}^{obs}(W, \mathbf W_{\mathcal{N}_{K}}) - \bar{Y}_{HT}^{obs}(W', \mathbf W'_{\mathcal{N}_{K}})\right) + 1/2\left(\bar{Y}_{HT}^{obs}(W^{*}, \mathbf W^{*}_{\mathcal{N}_{K}}) - \bar{Y}_{HT}^{obs}(W^{*\prime}, \mathbf W^{*\prime}_{\mathcal{N}_{K}})\right)\right] \nn \\  
 = & 1/4\mathbf{Var}(\bar{Y}_{HT}^{obs}(W, \mathbf W_{\mathcal{N}_{K}})) + 1/4\mathbf{Var}(\bar{Y}_{HT}^{obs}(W', \mathbf W'_{\mathcal{N}_{K}})) \nn \\  
 & +  1/4 \mathbf{Var}(\bar{Y}_{HT}^{obs}(W^{*}, \mathbf W^{*}_{\mathcal{N}_{K}})) + 1/4\mathbf{Var}(\bar{Y}_{HT}^{obs}(W^{*'}, \mathbf W^{*'}_{\mathcal{N}_{K}})) \nn \\  
& -  1/2 \mathbf{Cov}(\bar{Y}_{HT}^{obs}(W, \mathbf W_{\mathcal{N}_{K}}), \bar{Y}_{HT}^{obs}(W', \mathbf W'_{\mathcal{N}_{K}})) 
 + 1/2 \mathbf{Cov}(\bar{Y}_{HT}^{obs}(W, \mathbf W_{\mathcal{N}_{K}}), \bar{Y}_{HT}^{obs}(W^{*}, \mathbf W^{*}_{\mathcal{N}_{K}})) \nn \\  
 & -  1/2 \mathbf{Cov}(\bar{Y}_{HT}^{obs}(W, \mathbf W_{\mathcal{N}_{K}}), \bar{Y}_{HT}^{obs}(W^{*'}, \mathbf W^{*'}_{\mathcal{N}_{K}})) 
 - 1/2 \mathbf{Cov}(\bar{Y}_{HT}^{obs}(W', \mathbf W'_{\mathcal{N}_{K}}), \bar{Y}_{HT}^{obs}(W^{*}, \mathbf W^{*}_{\mathcal{N}_{K}}))  \nn \\  
 & +  1/2 \mathbf{Cov}(\bar{Y}_{HT}^{obs}(W', \mathbf W'_{\mathcal{N}_{K}}), \bar{Y}_{HT}^{obs}(W^{*'}, \mathbf W^{*'}_{\mathcal{N}_{K}})) 
 - 1/2 \mathbf{Cov}(\bar{Y}_{HT}^{obs}(W^{*}, \mathbf W^{*}_{\mathcal{N}_{K}}), \bar{Y}_{HT}^{obs}(W^{*'}, \mathbf W^{*'}_{\mathcal{N}_{K}})).
\end{align}
\end{theorem}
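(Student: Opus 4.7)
The plan is to recognize Theorem~\ref{Theorem6} as a direct consequence of the bilinearity of covariance applied to a linear combination of four random variables. Writing the estimator compactly as
\begin{equation*}
    T = \tfrac{1}{2}A - \tfrac{1}{2}B + \tfrac{1}{2}C - \tfrac{1}{2}D,
\end{equation*}
where $A = \bar{Y}_{HT}^{obs}(W, \mathbf{W}_{\mathcal{N}_{K}})$, $B = \bar{Y}_{HT}^{obs}(W', \mathbf{W}'_{\mathcal{N}_{K}})$, $C = \bar{Y}_{HT}^{obs}(W^{*}, \mathbf{W}^{*}_{\mathcal{N}_{K}})$, and $D = \bar{Y}_{HT}^{obs}(W^{*\prime}, \mathbf{W}^{*\prime}_{\mathcal{N}_{K}})$, the proof reduces to invoking the standard identity
\begin{equation*}
    \mathbf{Var}\!\left(\sum_{k} a_k X_k\right) = \sum_{k} a_k^2 \mathbf{Var}(X_k) + 2\sum_{k<l} a_k a_l \mathbf{Cov}(X_k, X_l),
\end{equation*}
with coefficients $(a_1, a_2, a_3, a_4) = (1/2, -1/2, 1/2, -1/2)$.

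First, I would expand $\mathbf{Var}(T)$ using the identity above. Since $a_k^2 = 1/4$ for each $k$, each of the four variance terms in the claimed expression receives the coefficient $1/4$, which matches the first two lines of the right-hand side of~\eqref{varianceass2directCH3}. Second, I would enumerate the six pairwise cross terms. For each pair $(k, l)$ with $k < l$, the coefficient $2 a_k a_l$ equals $-1/2$ when $a_k$ and $a_l$ have opposite signs (pairs $(A,B)$, $(A,D)$, $(B,C)$, $(C,D)$) and $+1/2$ when they have the same sign (pairs $(A,C)$ and $(B,D)$). A direct check confirms this matches the remaining lines of~\eqref{varianceass2directCH3} term-by-term.

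Since the variances and covariances appearing on the right-hand side are exactly those given by the closed forms~\eqref{VAR} and~\eqref{COV} from Section~\ref{makereference3.5}, no further expansion is needed---each term in the claimed decomposition is well-defined, and the identity holds by bilinearity of covariance combined with $\mathbf{Cov}(X,X)=\mathbf{Var}(X)$. There is no real obstacle: the result is essentially bookkeeping, and the only thing to be careful about is the sign pattern of the six covariance terms, which is determined entirely by the signs of the coefficients $a_k$. I would finish the proof by briefly noting that the same argument gives analogous variance formulas for $\widehat{\delta^*}_{HT,dir}$, $\widehat{\delta^*}_{HT,ind}$, and $\widehat{\delta^*}_{HT,\ell}$ by specializing the four exposures appropriately.
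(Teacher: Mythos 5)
Your proposal is correct and matches the paper's approach: the paper offers no formal proof beyond noting that the identity follows from ``repeated application'' of the variance and covariance formulas \eqref{VAR} and \eqref{COV}, which is precisely the bilinearity-of-covariance expansion you carry out, and your sign bookkeeping for the six cross terms agrees with \eqref{varianceass2directCH3} term by term.
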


\section{Variance Estimators}
\label{makereference3.7}

We extend the work provided in \citep{aronow2013conservative, aronow2017estimating} and \citep{lohr2019sampling} to derive conservative estimators for the variance for all considered estimators under KNNIM.  
This requires estimating all variance and covariance terms in~\eqref{VARDIF} and~\eqref{varianceass2directCH3}.



We begin by estimating the $\mathbf{Var}(\bar{Y}_{HT}^{obs}(W,\mathbf{W}_{\mathcal{N}_{K}}))$ terms.
The standard Horvitz-Thompson estimator for these variances is
\begin{multline}
\widehat{\mathbf{Var}}_{HT}(\bar{Y}_{HT}^{obs}(W,\mathbf{W}_{\mathcal{N}_{K}})) =  \frac{1}{N^2}\sum_{i = 1}^NI_{i}(W,\mathbf{W}_{\mathcal{N}_{K}})[1 - \pi_{i}(W,\mathbf{W}_{\mathcal{N}_{K}})]\left[ \frac{Y_{i}}{\pi_{i}(W,\mathbf{W}_{\mathcal{N}_{K}})}\right]^2
\\  + \frac{1}{N^2}\sum_{i = 1}^N\sum_{j \ne i}I_{i}(W,\mathbf{W}_{\mathcal{N}_{K}})I_{j}(W,\mathbf{W}_{\mathcal{N}_{K}})\frac{[\pi_{ij}(W,\mathbf{W}_{\mathcal{N}_{K}}) - \pi_{i}(W,\mathbf{W}_{\mathcal{N}_{K}})\pi_{j}(W,\mathbf{W}_{\mathcal{N}_{K}})]}{\pi_{ij}(W,\mathbf{W}_{\mathcal{N}_{K}})}
\\ \times \frac{Y_{i}}{\pi_{i}(W,\mathbf{W}_{\mathcal{N}_{K}})}\frac{Y_{j}}{\pi_{j}(W,\mathbf{W}_{\mathcal{N}_{K}})}.
\end{multline}


If the joint probabilities $\pi_{ij}(W,\mathbf{W}_{\mathcal{N}_{K}}) > 0$ for all $i$ and $j$, then this estimated variance is unbiased.
However, under KNNIM, there will likely be incompatibility of this exposure for some $i'$ and $j'$ (see Section~\ref{probcompute} for details), and hence $\pi_{i'j'}(W,\mathbf{W}_{\mathcal{N}_{K}}) = 0$ for these units.
These probabilities lead to bias in the estimate of $\mathbf{Var}(\bar{Y}_{HT}^{obs}(W,\mathbf{W}_{\mathcal{N}_{K}}))$:
\begin{equation}
\label{eq:exvarhat}
\mathbf{E}(\widehat{\mathbf{Var}}_{HT}(\bar{Y}_{HT}^{obs}(W,\mathbf{W}_{\mathcal{N}_{K}}))) =  \mathbf{Var}(\bar{Y}_{HT}^{obs}(W,\mathbf{W}_{\mathcal{N}_{K}})) + A_{Var}
\end{equation}
where 
\begin{equation}
\label{eq:defavar}
A_{Var} = \frac{1}{N^2}\sum_{i = 1}^N\sum_{\substack{j \ne i,\\ \pi_{ij}(W,\mathbf{W}_{\mathcal{N}_{K}}) = 0 }} y_{i}(W,\mathbf{W}_{\mathcal{N}_{K}})y_{j}(W,\mathbf{W}_{\mathcal{N}_{K}}).
\end{equation}

Young’s inequality as derived in \citet{aronow2013conservative, aronow2017estimating} can be used to obtain a quantity $\widehat A_{Var}$ that satisfies $\mathbf{E}(\widehat A_{Var}) \geq |A_{Var}|$.  
We now give a brief proof of Young's inequality as applied to KNNIM, and demonstrate how it can be used to find $\widehat A_{Var}$.
\begin{lemma}
\label{lem:consest}
    For any two (not necessarily unique) exposures $(W,\mathbf{W}_{\mathcal{N}_{K}}), (W',\mathbf{W'}_{\mathcal{N}_{K}})$,
    \begin{align}
 |y_{i}(W,\mathbf{W}_{\mathcal{N}_{K}})y_{j}(W',\mathbf{W'}_{\mathcal{N}_{K}})| \leq 
 \mathbf{E}\left(\frac{I_{i}(W,\mathbf{W}_{\mathcal{N}_{K}})Y_{i}^2}{2\pi_{i}(W,\mathbf{W}_{\mathcal{N}_{K}})} +
 \frac{I_{j}(W',\mathbf{W'}_{\mathcal{N}_{K}})Y_{j}^2}{2\pi_{j}(W',\mathbf{W'}_{\mathcal{N}_{K}})}\right).
    \end{align}
\end{lemma}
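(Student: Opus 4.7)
The plan is to derive the bound in two independent moves: first an elementary Young-type inequality on the product of two real numbers, and then a Horvitz--Thompson identity that rewrites squared potential outcomes as expectations of the observed quantities on the right-hand side.

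First, I would apply the standard form of Young's inequality, $|ab| \leq (a^2+b^2)/2$, with $a = y_i(W,\mathbf{W}_{\mathcal{N}_{K}})$ and $b = y_j(W',\mathbf{W}'_{\mathcal{N}_{K}})$, to obtain
\begin{equation*}
|y_{i}(W,\mathbf{W}_{\mathcal{N}_{K}})\, y_{j}(W',\mathbf{W}'_{\mathcal{N}_{K}})| \;\leq\; \tfrac{1}{2}\, y_{i}(W,\mathbf{W}_{\mathcal{N}_{K}})^{2} + \tfrac{1}{2}\, y_{j}(W',\mathbf{W}'_{\mathcal{N}_{K}})^{2}.
\end{equation*}
This step is purely algebraic and requires nothing about the experimental design or KNNIM.

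Next, I would rewrite each squared potential outcome as an expectation. The key observation is that under Assumption~\ref{Assumption1}, whenever $I_{i}(W,\mathbf{W}_{\mathcal{N}_{K}}) = 1$, the observed response satisfies $Y_i = y_i(W,\mathbf{W}_{\mathcal{N}_{K}})$ (the treatment statuses of units outside $\mathcal{N}_{iK}$ do not alter $y_i$). Consequently $I_{i}(W,\mathbf{W}_{\mathcal{N}_{K}})\, Y_i^{2} = I_{i}(W,\mathbf{W}_{\mathcal{N}_{K}})\, y_{i}(W,\mathbf{W}_{\mathcal{N}_{K}})^{2}$, and taking expectations gives
\begin{equation*}
\mathbf{E}\!\left(\frac{I_{i}(W,\mathbf{W}_{\mathcal{N}_{K}})\, Y_i^{2}}{\pi_{i}(W,\mathbf{W}_{\mathcal{N}_{K}})}\right) = \frac{\pi_{i}(W,\mathbf{W}_{\mathcal{N}_{K}})\, y_{i}(W,\mathbf{W}_{\mathcal{N}_{K}})^{2}}{\pi_{i}(W,\mathbf{W}_{\mathcal{N}_{K}})} = y_{i}(W,\mathbf{W}_{\mathcal{N}_{K}})^{2},
\end{equation*}
and similarly for the $j$ term with exposure $(W',\mathbf{W}'_{\mathcal{N}_{K}})$. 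This is the Horvitz--Thompson identity for second moments.

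Combining the two steps and using linearity of expectation yields exactly the stated bound. Because both ingredients are entirely routine, I do not anticipate a real obstacle; the only subtlety worth flagging is the use of Assumption~\ref{Assumption1} to justify replacing $Y_i^2$ with $y_i(W,\mathbf{W}_{\mathcal{N}_{K}})^2$ on the event $\{I_i(W,\mathbf{W}_{\mathcal{N}_{K}}) = 1\}$, which is what lets the dependence on the unspecified treatment statuses outside $\mathcal{N}_{iK}$ drop out and leaves a bound written purely in terms of marginal exposure probabilities.
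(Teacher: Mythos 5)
Your proposal is correct and follows essentially the same route as the paper: the paper also bounds $|y_i y_j|$ by $\tfrac{1}{2}y_i^2 + \tfrac{1}{2}y_j^2$ via the arithmetic-mean--geometric-mean (Young's) inequality and then observes that each squared potential outcome is unbiasedly estimated by $I_i(W,\mathbf{W}_{\mathcal{N}_K})Y_i^2/\pi_i(W,\mathbf{W}_{\mathcal{N}_K})$. Your explicit justification of the Horvitz--Thompson second-moment identity via Assumption~\ref{Assumption1} is a detail the paper leaves implicit, but the argument is the same.
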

\begin{proof}
To see this, first note that
\begin{equation}
   |y_{i}(W,\mathbf{W}_{\mathcal{N}_{K}})y_{j}(W',\mathbf{W'}_{\mathcal{N}_{K}})| = \sqrt{ y^2_{i}(W,\mathbf{W}_{\mathcal{N}_{K}})y^2_{j}(W',\mathbf{W'}_{\mathcal{N}_{K}})}.
\end{equation}
The upper-bound is then proven by applying the arithmetic-mean geometric-mean inequality~\citep{steele2004cauchy}
\begin{equation}
    \sqrt{ y^2_{i}(W,\mathbf{W}_{\mathcal{N}_{K}})y^2_{j}(W',\mathbf{W'}_{\mathcal{N}_{K}})} \leq  \frac{y^2_{i}(W,\mathbf{W}_{\mathcal{N}_{K}})}{2} +
       \frac{y^2_{j}(W',\mathbf{W'}_{\mathcal{N}_{K}})}{2}
\end{equation}
and by noting that the right-hand side can be unbiasedly estimated by
    \begin{equation}
        \frac{I_{i}(W,\mathbf{W}_{\mathcal{N}_{K}})Y_{i}^2}{2\pi_{i}(W,\mathbf{W}_{\mathcal{N}_{K}})} +
 \frac{I_{j}(W',\mathbf{W'}_{\mathcal{N}_{K}})Y_{j}^2}{2\pi_{j}(W',\mathbf{W'}_{\mathcal{N}_{K}})}.
    \end{equation}
\end{proof}

Thus, we define 
\begin{equation}
\widehat{A}_{Var}(W,\mathbf{W}_{\mathcal{N}_{K}}) = \frac{1}{N^2}\sum_{i = 1}^N\sum_{\substack{j \ne i,\\ \pi_{ij}(W,\mathbf{W}_{\mathcal{N}_{K}}) = 0 }}
\left[\frac{I_{i}(W,\mathbf{W}_{\mathcal{N}_{K}})Y_{i}^{2}}{2\pi_{i}(W,\mathbf{W}_{\mathcal{N}_{K}})} + \frac{I_{j}(W,\mathbf{W}_{\mathcal{N}_{K}})Y_{j}^{2}}{2\pi_{j}(W,\mathbf{W}_{\mathcal{N}_{K}})}\right].
\end{equation}
From Lemma~\ref{lem:consest}, linearity of expectations, and the triangle inequality, it follows that
\begin{align}
\mathbf E\left(\widehat{A}_{Var}(W,\mathbf{W}_{\mathcal{N}_{K}})\right) =~ & 
\frac{1}{N^2}\sum_{i = 1}^N\sum_{\substack{j \ne i,\\ \pi_{ij}(W,\mathbf{W}_{\mathcal{N}_{K}}) = 0 }}
\left[\frac{y^2_{i}(W,\mathbf{W}_{\mathcal{N}_{K}})}{2} +
       \frac{y^2_{j}(W,\mathbf{W}_{\mathcal{N}_{K}})}{2}\right] \nn \\ \geq ~ & \frac{1}{N^2}\sum_{i = 1}^N\sum_{\substack{j \ne i,\\ \pi_{ij}(W,\mathbf{W}_{\mathcal{N}_{K}}) = 0 }}
|y_{i}(W,\mathbf{W}_{\mathcal{N}_{K}})y_{j}(W,\mathbf{W}_{\mathcal{N}_{K}})| 
\nn \\ \geq ~ & \left |\frac{1}{N^2}\sum_{i = 1}^N\sum_{\substack{j \ne i,\\ \pi_{ij}(W,\mathbf{W}_{\mathcal{N}_{K}}) = 0 }}
y_{i}(W,\mathbf{W}_{\mathcal{N}_{K}})y_{j}(W,\mathbf{W}_{\mathcal{N}_{K}})\right| = |A_{Var}|. 
\label{eq:consineq}
\end{align}
Therefore, by setting
\begin{equation}
\widehat{\mathbf{Var}}_{A}(\bar{Y}_{HT}^{obs}(W,\mathbf{W}_{\mathcal{N}_{K}})) = \widehat{\mathbf{Var}}_{HT}(\bar{Y}_{HT}^{obs}(W,\mathbf{W}_{\mathcal{N}_{K}})) + \widehat{A}_{Var}(W,\mathbf{W}_{\mathcal{N}_{K}}),
\end{equation}
we obtain a conservative estimate of the variance:
\begin{equation}
    \mathbf{E}\left(\widehat{\mathbf{Var}}_{A}(\bar{Y}_{HT}^{obs}(W,\mathbf{W}_{\mathcal{N}_{K}}))\right) \geq \mathbf{Var}_{HT}(\bar{Y}_{HT}^{obs}(W,\mathbf{W}_{\mathcal{N}_{K}})) + |A_{Var}| + A_{Var} \geq \mathbf{Var}_{HT}(\bar{Y}_{HT}^{obs}(W,\mathbf{W}_{\mathcal{N}_{K}}))
\end{equation}


We apply a similar approach for estimating covariance components. 
Define
\begin{multline}\label{covhat11}
\widehat{\mathbf{Cov}}_{HT}(\bar{Y}_{HT}^{obs}(W,\mathbf{W}_{\mathcal{N}_{K}}),\bar{Y}_{HT}^{obs}(W', \mathbf{W}'_{\mathcal{N}_{K}})) = 
\\ \frac{1}{N^2}\sum_{i = 1}^N\sum_{\substack{j \in \{1,\ldots,N\}\\ \pi_{ij}((W,\mathbf{W}_{\mathcal{N}_{K}}),(W', \mathbf{W}'_{\mathcal{N}_{K}})) > 0 } }
\left[\frac{I_{i}(W,\mathbf{W}_{\mathcal{N}_{K}})I_{j}(W', \mathbf{W}'_{\mathcal{N}_{K}})}{\pi_{ij}((W,\mathbf{W}_{\mathcal{N}_{K}}),(W', \mathbf{W}'_{\mathcal{N}_{K}}))}
\frac{Y_{i}}{\pi_{i}(W,\mathbf{W}_{\mathcal{N}_{K}})}\frac{Y_{j}}{\pi_{j}(W', \mathbf{W}'_{\mathcal{N}_{K}})}\right]
\\ \times \left[\pi_{ij}((W,\mathbf{W}_{\mathcal{N}_{K}}),(W', \mathbf{W}'_{\mathcal{N}_{K}}))-\pi_{i}(W,\mathbf{W}_{\mathcal{N}_{K}})\pi_{j}(W', \mathbf{W}'_{\mathcal{N}_{K}})\right]
\end{multline}

Similar to~\eqref{eq:exvarhat} and~\eqref{eq:defavar}, we have:
\begin{align}
\label{eq:excovhat}
    &\mathbf{E}\left(\widehat{\mathbf{Cov}}_{HT}(\bar{Y}_{HT}^{obs}(W,\mathbf{W}_{\mathcal{N}_{K}}),\bar{Y}_{HT}^{obs}(W', \mathbf{W}'_{\mathcal{N}_{K}})\right) \nn \\=~ & \mathbf{Cov}(\bar{Y}_{HT}^{obs}(W,\mathbf{W}_{\mathcal{N}_{K}}),\bar{Y}_{HT}^{obs}(W', \mathbf{W}'_{\mathcal{N}_{K}})) - A_{Cov}
\end{align}
where 
\begin{equation}
\label{eq:defacov}
A_{Cov} = \sum_{i = 1}^N\sum_{\substack{j \in \{1,\ldots,N\}\\ \pi_{ij}((W,\mathbf{W}_{\mathcal{N}_{K}}),(W', \mathbf{W}'_{\mathcal{N}_{K}})) = 0 }} y_{i}(W,\mathbf{W}_{\mathcal{N}_{K}})y_{j}(W',\mathbf{W}'_{\mathcal{N}_{K}}).
\end{equation}

From repeated application of Lemma~\ref{lem:consest} in a way that mimics the argument in~\eqref{eq:consineq}, we can obtain conservative estimators for the covariance.
Specifically, define:
\begin{multline}\label{eq: COVHATA}
\widehat{\mathbf{Cov}}_{A}(\bar{Y}_{HT}^{obs}(W,\mathbf{W}_{\mathcal{N}_{K}}),\bar{Y}_{HT}^{obs}(W', \mathbf{W}'_{\mathcal{N}_{K}}))= \widehat{\mathbf{Cov}}_{HT}(\bar{Y}_{HT}^{obs}(W,\mathbf{W}_{\mathcal{N}_{K}}),\bar{Y}_{HT}^{obs}(W', \mathbf{W}'_{\mathcal{N}_{K}}))\\ - \frac{1}{N^2}\sum_{i = 1}^N\sum_{\substack{j \in \{1,\ldots,N\}\\ \pi_{ij}((W,\mathbf{W}_{\mathcal{N}_{K}}),(W', \mathbf{W}'_{\mathcal{N}_{K}})) = 0 }}\left[\frac{I_{i}(W,\mathbf{W}_{\mathcal{N}_{K}})Y^2_{i}}{2\pi_{i}(W,\mathbf{W}_{\mathcal{N}_{K}})} + \frac{I_{j}(W', \mathbf{W}'_{\mathcal{N}_{K}})Y^2_{j}}{2\pi_{j}(W', \mathbf{W}'_{\mathcal{N}_{K}})}\right]
\end{multline}
and
\begin{multline}\label{eq: COVHATB}
\widehat{\mathbf{Cov}}_{B}(\bar{Y}_{HT}^{obs}(W,\mathbf{W}_{\mathcal{N}_{K}}),\bar{Y}_{HT}^{obs}(W', \mathbf{W}'_{\mathcal{N}_{K}}))= 
\widehat{\mathbf{Cov}}_{HT}(\bar{Y}_{HT}^{obs}(W,\mathbf{W}_{\mathcal{N}_{K}}),\bar{Y}_{HT}^{obs}(W', \mathbf{W}'_{\mathcal{N}_{K}})) \\  + \frac{1}{N^2}\sum_{i = 1}^N\sum_{\substack{j \in \{1,\ldots,N\}\\ \pi_{ij}((W,\mathbf{W}_{\mathcal{N}_{K}}),(W', \mathbf{W}'_{\mathcal{N}_{K}})) = 0 }}\left[\frac{I_{i}(W,\mathbf{W}_{\mathcal{N}_{K}})Y^2_{i}}{2\pi_{i}(W,\mathbf{W}_{\mathcal{N}_{K}})} + \frac{I_{j}(W', \mathbf{W}'_{\mathcal{N}_{K}})Y^2_{j}}{2\pi_{j}(W', \mathbf{W}'_{\mathcal{N}_{K}})}\right].
\end{multline}
Note, the only difference in these two quantities is that the last term is subtracted in~\eqref{eq: COVHATA} and added in~\eqref{eq: COVHATB}.
The ensuing Lemma immediately follows.
\begin{lemma}
\begin{multline}\label{EXPECTEDCOVHAT2}
  \mathbf{E}(\widehat{\mathbf{Cov}}_{A}(\bar{Y}_{HT}^{obs}(W,\mathbf{W}_{\mathcal{N}_{K}}),\bar{Y}_{HT}^{obs}(W', \mathbf{W}'_{\mathcal{N}_{K}}))) \leq \mathbf{Cov}(\bar{Y}_{HT}^{obs}(W,\mathbf{W}_{\mathcal{N}_{K}}),\bar{Y}_{HT}^{obs}(W', \mathbf{W}'_{\mathcal{N}_{K}})) \\ \leq \mathbf{E}(\widehat{\mathbf{Cov}}_{B}(\bar{Y}_{HT}^{obs}(W,\mathbf{W}_{\mathcal{N}_{K}}),\bar{Y}_{HT}^{obs}(W', \mathbf{W}'_{\mathcal{N}_{K}}))).
\end{multline}
\end{lemma}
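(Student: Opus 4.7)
The plan is to assemble the lemma from three ingredients already developed in the text: the identity in~\eqref{eq:excovhat} that expresses the expectation of the HT covariance estimator as the true covariance minus the bias term $A_{Cov}$; the definitions in~\eqref{eq: COVHATA} and~\eqref{eq: COVHATB} which subtract and add, respectively, the same Young-inequality correction term; and a reuse of Lemma~\ref{lem:consest} (the AM--GM based bound on products of potential outcomes) applied termwise across the incompatible pairs appearing in $A_{Cov}$.

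First, write $\widehat{A}_{Cov}$ for the correction quantity
\begin{equation*}
\widehat{A}_{Cov} = \frac{1}{N^2}\sum_{i = 1}^N\sum_{\substack{j \in \{1,\ldots,N\}\\ \pi_{ij}((W,\mathbf{W}_{\mathcal{N}_{K}}),(W', \mathbf{W}'_{\mathcal{N}_{K}})) = 0 }}\left[\frac{I_{i}(W,\mathbf{W}_{\mathcal{N}_{K}})Y^2_{i}}{2\pi_{i}(W,\mathbf{W}_{\mathcal{N}_{K}})} + \frac{I_{j}(W', \mathbf{W}'_{\mathcal{N}_{K}})Y^2_{j}}{2\pi_{j}(W', \mathbf{W}'_{\mathcal{N}_{K}})}\right]
\end{equation*}
so that $\widehat{\mathbf{Cov}}_A = \widehat{\mathbf{Cov}}_{HT} - \widehat{A}_{Cov}$ and $\widehat{\mathbf{Cov}}_B = \widehat{\mathbf{Cov}}_{HT} + \widehat{A}_{Cov}$. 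Apply Lemma~\ref{lem:consest} to each summand restricted to the incompatible pairs; then summing, using linearity of expectation, and the triangle inequality exactly as in~\eqref{eq:consineq} yields $\mathbf{E}(\widehat{A}_{Cov}) \geq |A_{Cov}|$.

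Next, combine with~\eqref{eq:excovhat}. By linearity of expectation,
\begin{align*}
\mathbf{E}(\widehat{\mathbf{Cov}}_B) &= \mathbf{Cov}(\bar{Y}_{HT}^{obs}(W,\mathbf{W}_{\mathcal{N}_{K}}),\bar{Y}_{HT}^{obs}(W', \mathbf{W}'_{\mathcal{N}_{K}})) - A_{Cov} + \mathbf{E}(\widehat{A}_{Cov}),\\
\mathbf{E}(\widehat{\mathbf{Cov}}_A) &= \mathbf{Cov}(\bar{Y}_{HT}^{obs}(W,\mathbf{W}_{\mathcal{N}_{K}}),\bar{Y}_{HT}^{obs}(W', \mathbf{W}'_{\mathcal{N}_{K}})) - A_{Cov} - \mathbf{E}(\widehat{A}_{Cov}).
\end{align*}
Since $\mathbf{E}(\widehat{A}_{Cov}) \geq |A_{Cov}| \geq A_{Cov}$, the first display bounds the true covariance from above; since $\mathbf{E}(\widehat{A}_{Cov}) \geq |A_{Cov}| \geq -A_{Cov}$, the second bounds it from below. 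Chaining these two inequalities gives exactly~\eqref{EXPECTEDCOVHAT2}.

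I do not anticipate any real obstacle: the identity~\eqref{eq:excovhat} supplies the sign and magnitude of the bias, and Lemma~\ref{lem:consest} supplies precisely the pairwise bound needed. The only minor care is to keep the bookkeeping on signs consistent, since $A_{Cov}$ enters~\eqref{eq:excovhat} with a minus sign (unlike $A_{Var}$ in~\eqref{eq:exvarhat}), so that $\widehat{\mathbf{Cov}}_B$---the version that \emph{adds} the correction---produces the upper bound, while $\widehat{\mathbf{Cov}}_A$ produces the lower bound. No additional structural assumption on the exposures is required, and the argument is valid for both compatible-pair supported exposures and those whose joint probabilities vanish on many index pairs.
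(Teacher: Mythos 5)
Your proposal is correct and follows essentially the same route the paper intends: it applies Lemma~\ref{lem:consest} termwise over the incompatible pairs to get $\mathbf{E}(\widehat{A}_{Cov}) \geq |A_{Cov}|$, then combines this with the bias identity~\eqref{eq:excovhat} and the definitions~\eqref{eq: COVHATA}--\eqref{eq: COVHATB}, exactly mirroring the argument in~\eqref{eq:consineq} that the paper says the lemma ``immediately follows'' from. Your explicit sign bookkeeping (noting that $A_{Cov}$ enters~\eqref{eq:excovhat} with a minus sign, so the added correction gives the upper bound) is a useful clarification of a step the paper leaves implicit.
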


Thereby, the conservative variance estimators for the estimators in the previous section under Assumption \ref{Assumption1} for any two exposures $(W,\mathbf{W}_{\mathcal{N}_{K}})$, $(W',\mathbf{W}'_{\mathcal{N}_{K}})$ 
and under Assumption \ref{Assumption2} for any four treatment exposures $(W, \mathbf W_{\mathcal{N}_{K}}), (W', \mathbf W'_{\mathcal{N}_{K}}), (W^{*}, \mathbf W^{*}_{\mathcal{N}_{K}}), (W^{*\prime}, \mathbf W^{*\prime}_{\mathcal{N}_{K}})$ respectively can be derived as follows,
\begin{multline}
\label{VARDIFEstimate}
\widehat{\mathbf{Var}}(\bar{Y}_{HT}^{obs}(W,\mathbf{W}_{\mathcal{N}_{K}}) - \bar{Y}_{HT}^{obs}(W',\mathbf{W'}_{\mathcal{N}_{K}}))=\\
\widehat{\mathbf{Var}}_{A}(\bar{Y}_{HT}^{obs}(W,\mathbf{W}_{\mathcal{N}_{K}})) + \widehat{\mathbf{Var}}_{A}(\bar{Y}_{HT}^{obs}(W',\mathbf{W}'_{\mathcal{N}_{K}})) 
-2\widehat{\mathbf{Cov}}_{A}(\bar{Y}_{HT}^{obs}(W,\mathbf{W}_{\mathcal{N}_{K}}),\bar{Y}_{HT}^{obs}(W',\mathbf{W}'_{\mathcal{N}_{K}})),
\end{multline}

\begin{align}
 \label{varianceass2directCH3Estimate}
&\widehat{\mathbf{Var}} \left[1/2\left(\bar{Y}_{HT}^{obs}(W, \mathbf W_{\mathcal{N}_{K}}) - \bar{Y}_{HT}^{obs}(W', \mathbf W'_{\mathcal{N}_{K}})\right) + 1/2\left(\bar{Y}_{HT}^{obs}(W^{*}, \mathbf W^{*}_{\mathcal{N}_{K}}) - \bar{Y}_{HT}^{obs}(W^{*\prime}, \mathbf W^{*\prime}_{\mathcal{N}_{K}})\right)\right] \nn \\  
 = & 1/4\widehat{\mathbf{Var}}_{A}(\bar{Y}_{HT}^{obs}(W, \mathbf W_{\mathcal{N}_{K}})) + 1/4\widehat{\mathbf{Var}}_{A}(\bar{Y}_{HT}^{obs}(W', \mathbf W'_{\mathcal{N}_{K}})) \nn \\  
 & +  1/4 \widehat{\mathbf{Var}}_{A}(\bar{Y}_{HT}^{obs}(W^{*}, \mathbf W^{*}_{\mathcal{N}_{K}})) + 1/4\widehat{\mathbf{Var}}_{A}(\bar{Y}_{HT}^{obs}(W^{*'}, \mathbf W^{*'}_{\mathcal{N}_{K}})) \nn \\  
& -  1/2 \widehat{\mathbf{Cov}}_{A}(\bar{Y}_{HT}^{obs}(W, \mathbf W_{\mathcal{N}_{K}}), \bar{Y}_{HT}^{obs}(W', \mathbf W'_{\mathcal{N}_{K}})) 
 + 1/2 \widehat{\mathbf{Cov}}_{B}(\bar{Y}_{HT}^{obs}(W, \mathbf W_{\mathcal{N}_{K}}), \bar{Y}_{HT}^{obs}(W^{*}, \mathbf W^{*}_{\mathcal{N}_{K}})) \nn \\  
 & -  1/2 \widehat{\mathbf{Cov}}_{A}(\bar{Y}_{HT}^{obs}(W, \mathbf W_{\mathcal{N}_{K}}), \bar{Y}_{HT}^{obs}(W^{*'}, \mathbf W^{*'}_{\mathcal{N}_{K}})) 
 - 1/2 \widehat{\mathbf{Cov}}_{A}(\bar{Y}_{HT}^{obs}(W', \mathbf W'_{\mathcal{N}_{K}}), \bar{Y}_{HT}^{obs}(W^{*}, \mathbf W^{*}_{\mathcal{N}_{K}}))  \nn \\  
 & +  1/2 \widehat{\mathbf{Cov}}_{B}(\bar{Y}_{HT}^{obs}(W', \mathbf W'_{\mathcal{N}_{K}}), \bar{Y}_{HT}^{obs}(W^{*'}, \mathbf W^{*'}_{\mathcal{N}_{K}})) 
 - 1/2 \widehat{\mathbf{Cov}}_{A}(\bar{Y}_{HT}^{obs}(W^{*}, \mathbf W^{*}_{\mathcal{N}_{K}}), \bar{Y}_{HT}^{obs}(W^{*'}, \mathbf W^{*'}_{\mathcal{N}_{K}})).
\end{align}

This general form of the variance estimators can be applied for all estimators considered under Assumptions \ref{Assumption1} and \ref{Assumption2}. For example, the conservative variance estimators for the A$\ell$NNIE under Assumptions \ref{Assumption1} and \ref{Assumption2} respectively are as follows,

\begin{multline}
\widehat{\mathbf{Var}}(\widehat{\delta}_{{HT,\ell^{th}}}) = 
\widehat{\mathbf{Var}}(\bar{Y}_{HT}^{obs}(0,\mathbf W^{*}_{\ell}) - \bar{Y}_{HT}^{obs}(0,\mathbf W^{*}_{\ell-1})) = \widehat{\mathbf{Var}}_{A}(\bar{Y}_{HT}^{obs}(0, \mathbf W^{*}_{\ell})) + \widehat{\mathbf{Var}}_{A}(\bar{Y}_{HT}^{obs}(0, \mathbf W^{*}_{\ell-1})) 
\\ -2\widehat{\mathbf{Cov}}_{A}(\bar{Y}_{HT}^{obs}(0, \mathbf W^{*}_{\ell}),\bar{Y}_{HT}^{obs}(0, \mathbf W^{*}_{\ell-1})).
\end{multline}

\begin{multline} 
\label{varhatassumption2}
\widehat{\mathbf{Var}}(\widehat{\delta^*}_{{HT,\ell^{th}}}) = 
\widehat{\mathbf{Var}}(1/2\left(\bar{Y}_{HT}^{obs}(1,\mathbf W^{*}_{\ell}) - \bar{Y}_{HT}^{obs}(1,\mathbf W^{*}_{\ell-1})\right) +1/2\left(\bar{Y}_{HT}^{obs}(0,\mathbf W^{*}_{\ell}) - \bar{Y}_{HT}^{obs}(0,\mathbf W^{*}_{\ell-1})\right)) \\ 
= \frac{1}{4}\widehat{\mathbf{Var}}_{A}(\bar{Y}_{HT}^{obs}(1,\mathbf W^{*}_{\ell})) 
+ \frac{1}{4}\widehat{\mathbf{Var}}_{A}(\bar{Y}_{HT}^{obs}(1, \mathbf W^{*}_{\ell-1})) 
\\ + \frac{1}{4}\widehat{\mathbf{Var}}_{A}(\bar{Y}_{HT}^{obs}(0, \mathbf W^{*}_{\ell})) + \frac{1}{4}\widehat{\mathbf{Var}}_{A}(\bar{Y}_{HT}^{obs}(0, \mathbf W^{*}_{\ell-1}))
\\ -\frac{1}{2} \widehat{\mathbf{Cov}}_{A}(\bar{Y}_{HT}^{obs}(1, \mathbf W^{*}_{\ell}), \bar{Y}_{HT}^{obs}(1, \mathbf W^{*}_{\ell-1})) 
+\frac{1}{2} \widehat{\mathbf{Cov}}_{B}(\bar{Y}_{HT}^{obs}(1, \mathbf W^{*}_{\ell}), \bar{Y}_{HT}^{obs}(0, \mathbf W^{*}_{\ell})) 
\\ -\frac{1}{2} \widehat{\mathbf{Cov}}_{A}(\bar{Y}_{HT}^{obs}(1, \mathbf W^{*}_{\ell}), \bar{Y}_{HT}^{obs}(0, \mathbf W^{*}_{\ell-1})) 
-\frac{1}{2}\widehat{\mathbf{Cov}}_{A}(\bar{Y}_{HT}^{obs}(1, \mathbf W^{*}_{\ell-1}), \bar{Y}_{HT}^{obs}(0, \mathbf W^{*}_{\ell})) 
\\ +\frac{1}{2}\widehat{\mathbf{Cov}}_{B}(\bar{Y}_{HT}^{obs}(1, \mathbf W^{*}_{\ell-1}), \bar{Y}_{HT}^{obs}(0, \mathbf W^{*}_{\ell-1})) 
-\frac{1}{2}\widehat{\mathbf{Cov}}_{A}(\bar{Y}_{HT}^{obs}(0, \mathbf W^{*}_{\ell}), \bar{Y}_{HT}^{obs}(0, \mathbf W^{*}_{\ell-1})).
\end{multline}

\section{Simulation}
\label{makereference3.8}

\begin{table}[htbp]

  \centering
  \begin{tabular}[c]{r|ccccccccc}
 \hline
        Model & 1 & 2 & 3 &4 & 5 &  6 &  7 &  8 & 9 \\
\hline
        $\delta_{1}$  & 0 & 0 & 0 & 2 & 2 & 2 & 3 & 3 & 3 \\
        $\delta_{2}$  & 0 & 0 & 0 & 1 & 1 & 1 & 2 & 2 & 2  \\
        $\delta_{3}$  & 0 & 0 & 0 & 0.5 & 0.5 & 0.5 & 1 & 1 & 1  \\
        $\delta_{t}$  & 0 & 1 & 4 & 0 & 1 & 4 & 0 & 1 & 4  \\
\hline
\end{tabular}
  \caption{Interference models.}%
  \label{table3.1}
\end{table}


\begin{table}[t!]
\small
  \caption{Estimates under completely randomized design and Bernoulli randomization model 1.}%
  \label{table3.2}
  \centering
  \begin{tabular}[c]{rc|ccc|ccc}\\
  &&\multicolumn{3}{|c}{Comp. Rand.} & \multicolumn{3}{|c}{Bern. Rand.} \\ \hline
        Estimator & Effects & Emp. EV,  & Emp. Var, & Var Est. & Emp. EV,  & Emp. Var, & Var Est.\\
\hline
       $ \widehat{\delta}_{{HT,tot}}$ & 0 & 0.0779  & 1.1668 &   1.1759 & 0.0059 & 1.2480 & 1.2037 \\
       $\widehat{\delta}_{{HT,dir}}$ & 0 & 0.0337  & 0.6078 &  0.6683  & -0.0150 & 0.6850  & 0.7016 \\
       $ \widehat{\delta^*}_{{HT,dir}}$ & 0 & 0.0303  & 0.2894 &  0.5036 & 0.0103 & 0.3116 & 0.5153 \\
       $\widehat{\delta}_{{HT,ind}}$ & 0 & 0.0442  & 0.9404 &  0.9393 & 0.0210 & 0.8232 & 0.9258 \\
       $ \widehat{\delta^*}_{{HT,ind}}$ & 0 & 0.0476  & 0.5615  & 0.6619 & -0.0044 & 0.5622 & 0.6696\\
       $\widehat{\delta}_{{HT,1^{st}}}$  & 0 & 0.0346  & 0.6273 &  0.6821 & 0.0502 & 0.6502 & 0.6804\\
       $\widehat{\delta^*}_{{HT,1^{st}}}$ & 0 & 0.0191 & 0.2647 &  0.3950 & 0.0173 & 0.2460 & 0.3922\\
        $\widehat{\delta}_{{HT,2^{nd}}}$  & 0 & -0.0111  & 0.4642 &  0.6023 & -0.0181 & 0.4430 & 0.5985 \\
       $\widehat{\delta^*}_{{HT,2^{nd}}}$ & 0 & -0.0021 & 0.2694 &  0.4625 & -0.0085 & 0.2702 & 0.4599\\
        $\widehat{\delta}_{{HT,3^{rd}}}$ & 0 & 0.0207  & 0.4541 &  0.6076 & -0.0110 & 0.4489 & 0.6048 \\
       $\widehat{\delta^*}_{{HT,3^{rd}}}$ & 0 & 0.0306 &   0.3014 &  0.4514 & -0.0132 & 0.3134 & 0.4629 \\
\hline
\end{tabular}
\end{table}

\begin{table}[b!]
\small
  \caption{Estimates under completely randomized design and Bernoulli randomization model 5.}%
  \label{table3.6}
  \centering
  \begin{tabular}[c]{rc|ccc|ccc}\\
  &&\multicolumn{3}{|c}{Comp. Rand.} & \multicolumn{3}{|c}{Bern. Rand.} \\ \hline
        Estimator & Effects & Emp. EV,  & Emp. Var, & Var Est. & Emp. EV,  & Emp. Var, & Var Est.\\
\hline
       $ \widehat{\delta}_{{HT,tot}}$   & 4.5 & 4.5453  & 3.3664 &  3.9196 & 4.6263 & 4.7542 & 5.3978 \\
       $\widehat{\delta}_{{HT,dir}}$  & 1 & 0.9804  & 3.9013 &  4.1313 & 1.1013  & 4.3757 & 4.6502 \\
       $ \widehat{\delta^*}_{{HT,dir}}$ & 1 & 1.0070  & 1.1503 &  1.9781 & 1.0725 & 1.1720 & 2.0765\\
       $\widehat{\delta}_{{HT,ind}}$ & 3.5 & 3.5649  & 1.5750 &  2.3108 & 3.5249  & 1.8276 & 2.5207 \\
       $ \widehat{\delta^*}_{{HT,ind}}$ & 3.5 & 3.5383  & 1.1485 &   1.9574 & 3.5537 & 2.0192 & 2.7832 \\
       $\widehat{\delta}_{{HT,1^{st}}}$  & 2 & 2.0414  & 0.8733 &  1.1102 & 2.0532 & 0.9882 & 1.1739\\
       $\widehat{\delta^*}_{{HT,1^{st}}}$ & 2 & 2.0245  & 0.4873 & 0.9667 & 2.0176 & 0.4648 & 0.9666 \\
        $\widehat{\delta}_{{HT,2^{nd}}}$  & 1 & 1.0003  & 1.4039 &  1.9204 & 0.9878 & 1.4253 & 2.0036 \\
       $\widehat{\delta^*}_{{HT,2^{nd}}}$  & 1 & 0.9930  & 0.9747 &  1.8982 & 0.9582 & 1.1082 & 2.0400 \\
        $\widehat{\delta}_{{HT,3^{rd}}}$ & 0.5 & 0.5230  & 1.8611 &  2.5928 & 0.4838 & 2.2543 & 2.8237 \\
       $\widehat{\delta^*}_{{HT,3^{rd}}}$ & 0.5 & 0.5207 & 1.4523 & 2.3927 & 0.5778 & 2.0343 & 2.7262\\
\hline
\end{tabular}
\end{table}

\begin{table}[tbp]
\small
  \caption{Estimates under completely randomized design and Bernoulli randomization model 9.}%
  \label{table3.10}
   \centering
  \begin{tabular}[c]{rc|ccc|ccc}\\
  &&\multicolumn{3}{|c}{Comp. Rand.} & \multicolumn{3}{|c}{Bern. Rand.} \\ \hline
        Estimator & Effects & Emp. EV,  & Emp. Var, & Var Est. & Emp. EV,  & Emp. Var, & Var Est.\\
\hline
       $ \widehat{\delta}_{{HT,tot}}$   & 10 & 10.0055  & 11.54004 &  14.4106 & 10.2734 & 18.2343 & 21.4468\\
       $\widehat{\delta}_{{HT,dir}}$  & 4 &3.9257  & 14.7035 &  15.0482 & 4.2456 & 17.5899 & 18.4553\\
       $ \widehat{\delta^*}_{{HT,dir}}$ & 4 & 3.9896  & 4.0963 &  7.2216 & 4.1566 & 4.2346 & 7.7429 \\
       $\widehat{\delta}_{{HT,ind}}$ & 6 & 6.0797  & 2.7909 &  4.8503 & 6.0277 & 3.5973 & 5.4461 \\
       $ \widehat{\delta^*}_{{HT,ind}}$ & 6 & 6.0158  & 3.3206 &  6.0066 & 6.1168 & 7.4354 & 9.8815\\
       $\widehat{\delta}_{{HT,1^{st}}}$  & 3 &3.0449  & 1.1582 &  1.6049 & 3.0547 & 1.3709 & 1.7503 \\
       $\widehat{\delta^*}_{{HT,1^{st}}}$ & 3 & 3.0232  & 1.4675 &  3.0152 & 3.0100 & 1.3786 & 3.0134 \\
        $\widehat{\delta}_{{HT,2^{nd}}}$  & 2 &2.0092  & 2.8500 &  3.9616 & 1.9923 & 2.9090 & 4.1353\\
       $\widehat{\delta^*}_{{HT,2^{nd}}}$  & 2 & 1.9838  & 3.2241 &  6.4517 & 1.9150 & 3.8171 & 6.9785 \\
        $\widehat{\delta}_{{HT,3^{rd}}}$ & 1 &1.0256  & 4.5078 &  6.1172 & 0.9806 & 5.3764 & 6.7556\\
       $\widehat{\delta^*}_{{HT,3^{rd}}}$ & 1 & 1.0086  & 5.3193 &  8.5285 & 1.1917 & 7.4434 & 9.8241\\ 
\hline
\end{tabular}
\end{table}

In this section, we assess the performance of our proposed estimators through a simulation study.
In our simulations, we consider a range of indirect effects from no indirect effects to moderate indirect effects and direct effects.
These simulations are designed to verify unbiasedness of our proposed estimators and to assess the tightness of the conservative variance bounds.
Additionally, these simulations will satisfy the assumption of no weak interaction between direct and indirect effects, allowing us to evaluate the gain in precision for estimators exploiting this assumption.

In our simulations, we generate responses under the following KNNIM model with $K = 3$ nearest neighbors:
 \begin{equation} 
Y_{i} = X_{i1} + X_{i2} + X_{i3} + \delta_{1}W_{i1} + \delta_{2}W_{i2} +\delta_{3}W_{i3} + \delta_{t}W_{i} 
\end{equation} 
where $\delta_\ell$ denotes the A$\ell$NNIE, $\ell = 1, 2, 3$, and $\delta_t$ denotes the ADE.  
The $X_{ip}$ are covariates, $X_{ip}\sim N(0,1)$, $p=1,2,3$.  
We use the squared Euclidean distance of the covariates $d(i,j) = \sum_{p=1}^3 (X_{ip} - X_{jp})^2$ as our measure of interaction.
In all models considered, closer neighbors provide stronger indirect effects: $|\delta_{1}|\geq  |\delta_{2} \geq |\delta_{3}|$. 
We assess our estimators under two randomization designs: completely randomized designs (CRD) where half of the $N$ units are assigned to treatment and Bernoulli randomized designs (BRD) with probability $p = 0.5$.

We consider 9 models of interference (see Table~\ref{table3.1}).  The first three models consider no interference, the next three models consider weak interference, and the last three models consider moderate interference.  
Within these categories of indirect effects, we also vary the strength of the direct effect, from no direct effect to a strong direct effect.


For each model, we evaluate the performance of the total, direct, indirect and $\ell_{th}$ nearest neighbor estimators 
under Assumption \ref{Assumption1} and Assumption \ref{Assumption2}. 
The experiment is replicated 1000 times with sample size N = 256.
The marginal and joint probabilities are computed as in Section~\ref{probcompute}.
For each run of 1000 replications, for each estimator, we compute the empirical expected value of the estimates (Emp. EV, the average value of the estimators), empirical variance (Emp. Var, the variance of the estimators),
and the average of the estimated variances (Var Est.). 
For brevity, we only include results for the CRD and BRD designs for Model 1 (Table~\ref{table3.2}), Model 5 (Table~\ref{table3.6}) and Model 9 (Table~\ref{table3.10}).
A complete analysis is included in the supplementary material.

\section{Discussion}
\label{makereference3.9}
  Results indicate, first and foremost, that all estimators under all models accurately estimate all direct and indirect effect parameters---as expected from our theoretical results.
Additionally we also verify the decomposition of estimation of the total effect into direct and indirect components and of the indirect effect estimator into nearest-neighbor components given by~\eqref{sumofdirectindirectCH3.1},~\eqref{sumofdirectindirectCH3.2},~\eqref{sumofdirectindirectCH36.1}, and~\eqref{sumofdirectindirectCH36.2}.
Finally, we confirm that our conservative variance estimators are larger than the true variance, as estimated by the empirical variance.

Under CRD, variance estimates tend to be smaller for closer neighbors under both Assumption \ref{Assumption1} and Assumption \ref{Assumption2}.
The one exception occurs for models that do not exhibit treatment interference. 
Additionally, the variance estimates tend to be smaller for estimators under Assumption~\ref{Assumption2} than under Assumption~\ref{Assumption1} (Tables \ref{table3.2}, \ref{table3.6}).
However, this may not be the case when direct effects are large;
the variance estimates for indirect effects estimators under Assumption 1 are about 50\% smaller than under Assumption \ref{Assumption2} (Table \ref{table3.10}).

Similar results hold for BRD except that Assumption~\ref{Assumption2} did not reduce standard errors of the indirect effects in the moderate direct effect scenario (Table \ref{table3.6}).
Moreover, all variance estimates for all estimators in weak and moderate interference scenarios are slightly larger than those under CRD (Tables \ref{table3.6} and \ref{table3.10}), with differences becoming larger as direct and indirect effects increase.
Hence, CRD may be preferable to BRD for this type of data.

\section{An Analysis of Social Network Experiment}
\label{makereference3.10}

We now apply our estimators to our motivating example which assessed the efficacy of a program designed to promote anti-conflict behaviors in New Jersey middle schools (see Section~\ref{sec:motivexamp}).
Recall that the experiment was explicitly designed to determine whether benefits of the program were propagated through social interactions between students, and moreover, that the strength of the connection between students was explicitly recorded when surveying students in the experiment.  
For our analysis, we use whether or not a student wears an orange wristband as our response.  
A student may be given an orange wristband by a seed student if the original student is observed to be exhibiting anti-conflict behaviors.

To better fit the study into our KNNIM framework, 
we restrict our analysis to seed-eligible students 
who listed two other seed-eligible students among their top 10 closest connections over the prior few weeks. 
This creates a dataset of $N = 348$ students.
We then analyze the data according to a KNNIM model with $K = 2$, where we assume that the seed-student designations were assigned completely at random to the 348 eligible students.
For more discussion of this choice of $K$, please see~\citet{alzubaidi2022detecting}.
Table~\ref{tableDataTTT} gives the number of people assigned to each of the $2^{K+1}= 8$ possible treatment exposures.



We use our proposed estimators to estimate the effect of treatment under Assumption~\ref{Assumption1} and Assumption~\ref{Assumption2}.
Results are presented in Table~\ref{tableDataK2T}. 
Under Assumption \ref{Assumption1},
results suggest that indirect exposure to treatment leads to about a 21\% increase in the probability of wearing a wristband.
The bulk of this effect (about 18\%) is due to the effect from a student's nearest neighbor.  
Our analysis did not find a strong effect of either a direct effect or indirect effects outside of the first nearest neighbor.

Under Assumption \ref{Assumption2}, direct exposure treatment leads to approximately a 5\% increase in the probability of wearing a wrist band while the indirect exposure provides about a 13\% increase in the probability of wearing a wrist band.  
Again, a majority of the indirect effect can be attributed to the treatment status of the nearest neighbor (about 10\%).

Finally, there seems to be differences in the estimates of indirect effect between Assumption~\ref{Assumption1} and Assumption~\ref{Assumption2}.
This could suggest a potential violation of Assumption~\ref{Assumption2} for this dataset; while all estimates of standard errors are considerably smaller under Assumption~\ref{Assumption2}, more care needs to be performed prior to moving forward with such an analysis.
Rigorous methods for testing Assumption~\ref{Assumption2} is an area of future research.

\begin{table}[tbp]
  \caption{Number of units in each exposure of Anti-Conflict Program Experiment with $K =2$, $N= 348$.}%
  \label{tableDataTTT}
  \centering
  \begin{tabular}[c]{lcccc}\\
  \hline
 \multicolumn{5}{c}{Indirect} \\
 \hline
        Direct & $(0,0)$&$(0,1)$&
        $(1,0)$&$(1,1)$ \\
\hline
       Treated & 38 & 42 & 39 & 34 \\
       Control & 40 & 59 & 46 & 50 \\
\hline
\end{tabular}
\end{table}


\begin{table}[tbp]
  \caption{Estimates of Anti-Conflict Program with K = 2 for only Treated School N = 348.}%
  \footnotesize
  \label{tableDataK2T}
  \centering
  \begin{tabular}[c]{rccccccccc}\\
      Estimator & 
      $ \widehat{\delta}_{{HT,tot}}$ &
      $\widehat{\delta}_{{HT,dir}}$ &
      $ \widehat{\delta^*}_{{HT,dir}}$ &
        $\widehat{\delta}_{{HT,ind}}$ &
        $ \widehat{\delta^*}_{{HT,ind}}$ & $\widehat{\delta}_{{HT,1^{st}}}$ & $\widehat{\delta^*}_{{HT,1^{st}}}$ & $\widehat{\delta}_{{HT,2^{nd}}}$ & $\widehat{\delta^*}_{{HT,2^{nd}}}$ \\
\hline
Estimates & 0.1899 & -0.0254 &  0.0559 & 0.2154 &  0.1340 & 0.1788 &  0.1019 & 0.0365 & 0.0320 \\
S.E. &0.0985 &  0.1332 &  0.0863 & 0.0927 & 0.0781 & 0.0822 & 0.0683 &  0.1148 & 0.0934 \\
\hline
\normalsize
\end{tabular}
\end{table}



\section{Conclusion}
\label{makereference3.11}

Substantial effort has been devoted to developing new techniques and models for causal inference when interference is presence---that is, when the effect of treatment is not limited to the unit that receives treatment, but can also impact other units that interact with the original units. 
One recently-proposed model for treatment interference is the $K$-nearest neighbors interference model (KNNIM)~\citep{alzubaidi2022detecting}, in which the treatment status of one unit can affect the response of the $K$ ``closest'' units to that original unit. 

We define causal estimands related to this model---including the average $\ell$th nearest neighbor treatment effect (A$\ell$NNIE)---and show that the indirect effect of treatment can be expressed as a sum of A$\ell$NNIE effects
That is, we show that the KNNIM model is able to determine how the treatment statuses of each of the individual neighbors of a unit contribute to that unit's response.  
We derive unbiased estimators for these estimands and derive conservative variance estimates for these unbiased estimators.
We consider a new assumption on the potential outcomes under treatment interference---no weak interaction between direct and indirect effects---and extend these estimators under this assumption.
We perform a simulation study to determine the efficacy of these estimators under various settings.

We apply our methodology to a recent experiment testing the efficacy of a program designed to reduce conflict in middle schools in New Jersey~\citep{paluck2016changing}.  
This experiment has two appealing characteristics: it was designed to propagate through peer-to-peer interactions between students and students were explicitly surveyed about which students were their ``closest connections.'' 
Using our approach, we give evidence that that one measure in particular---wearing a wristband given for demonstrating anti-conflict behaviors---is affected primarily by the students' closest connection.

\bibliographystyle{chicago}
\bibliography{References}

\begin{thebibliography}{}

\bibitem[\protect\citeauthoryear{Alzubaidi and Higgins}{Alzubaidi and
  Higgins}{2022}]{alzubaidi2022detecting}
Alzubaidi, S.~H. and M.~J. Higgins (2022).
\newblock Detecting treatment interference under the k-nearest-neighbors
  interference model.
\newblock {\em arXiv preprint arXiv:2203.16710\/}.

\bibitem[\protect\citeauthoryear{Aronow and Samii}{Aronow and
  Samii}{2013}]{aronow2013conservative}
Aronow, P.~M. and C.~Samii (2013).
\newblock Conservative variance estimation for sampling designs with zero
  pairwise inclusion probabilities.
\newblock {\em Surv. Methodol\/}~{\em 39}, 231--241.

\bibitem[\protect\citeauthoryear{Aronow and Samii}{Aronow and
  Samii}{2017}]{aronow2017estimating}
Aronow, P.~M. and C.~Samii (2017).
\newblock Estimating average causal effects under general interference, with
  application to a social network experiment.
\newblock {\em The Annals of Applied Statistics\/}~{\em 11\/}(4), 1912--1947.

\bibitem[\protect\citeauthoryear{Basse and Feller}{Basse and
  Feller}{2018}]{basse2018analyzing}
Basse, G. and A.~Feller (2018).
\newblock Analyzing two-stage experiments in the presence of interference.
\newblock {\em Journal of the American Statistical Association\/}~{\em
  113\/}(521), 41--55.

\bibitem[\protect\citeauthoryear{Cox}{Cox}{1958}]{cox1958planning}
Cox, D.~R. (1958).
\newblock {\em Planning of experiments.}
\newblock Wiley.

\bibitem[\protect\citeauthoryear{Eckles, Karrer, and Ugander}{Eckles
  et~al.}{2016}]{eckles2016design}
Eckles, D., B.~Karrer, and J.~Ugander (2016).
\newblock Design and analysis of experiments in networks: Reducing bias from
  interference.
\newblock {\em Journal of Causal Inference\/}~{\em 5\/}(1).

\bibitem[\protect\citeauthoryear{Forastiere, Airoldi, and Mealli}{Forastiere
  et~al.}{2020}]{forastiere2020identification}
Forastiere, L., E.~M. Airoldi, and F.~Mealli (2020).
\newblock Identification and estimation of treatment and interference effects
  in observational studies on networks.
\newblock {\em Journal of the American Statistical Association\/}, 1--18.

\bibitem[\protect\citeauthoryear{Gui, Xu, Bhasin, and Han}{Gui
  et~al.}{2015}]{gui2015network}
Gui, H., Y.~Xu, A.~Bhasin, and J.~Han (2015).
\newblock Network a/b testing: From sampling to estimation.
\newblock In {\em Proceedings of the 24th International Conference on World
  Wide Web}, pp.\  399--409.

\bibitem[\protect\citeauthoryear{Holland}{Holland}{1986}]{holland1986statistics}
Holland, P.~W. (1986).
\newblock Statistics and causal inference.
\newblock {\em Journal of the American statistical Association\/}~{\em
  81\/}(396), 945--960.

\bibitem[\protect\citeauthoryear{Horvitz and Thompson}{Horvitz and
  Thompson}{1952}]{horvitz1952generalization}
Horvitz, D.~G. and D.~J. Thompson (1952).
\newblock A generalization of sampling without replacement from a finite
  universe.
\newblock {\em Journal of the American statistical Association\/}~{\em
  47\/}(260), 663--685.

\bibitem[\protect\citeauthoryear{Hudgens and Halloran}{Hudgens and
  Halloran}{2008}]{hudgens2008toward}
Hudgens, M.~G. and M.~E. Halloran (2008).
\newblock Toward causal inference with interference.
\newblock {\em Journal of the American Statistical Association\/}~{\em
  103\/}(482), 832--842.

\bibitem[\protect\citeauthoryear{Imbens and Rubin}{Imbens and
  Rubin}{2015}]{imbens2015causal}
Imbens, G.~W. and D.~B. Rubin (2015).
\newblock {\em Causal inference in statistics, social, and biomedical
  sciences}.
\newblock Cambridge University Press.

\bibitem[\protect\citeauthoryear{Lohr}{Lohr}{2019}]{lohr2019sampling}
Lohr, S.~L. (2019).
\newblock {\em Sampling: design and analysis}.
\newblock Chapman and Hall/CRC.

\bibitem[\protect\citeauthoryear{Paluck, Shepherd, and Aronow}{Paluck
  et~al.}{2016}]{paluck2016changing}
Paluck, E.~L., H.~Shepherd, and P.~M. Aronow (2016).
\newblock Changing climates of conflict: A social network experiment in 56
  schools.
\newblock {\em Proceedings of the National Academy of Sciences\/}~{\em
  113\/}(3), 566--571.

\bibitem[\protect\citeauthoryear{Rosenbaum}{Rosenbaum}{2007}]{rosenbaum2007interference}
Rosenbaum, P.~R. (2007).
\newblock Interference between units in randomized experiments.
\newblock {\em Journal of the American Statistical Association\/}~{\em
  102\/}(477), 191--200.

\bibitem[\protect\citeauthoryear{Rubin}{Rubin}{1974}]{rubin1974estimating}
Rubin, D.~B. (1974).
\newblock Estimating causal effects of treatments in randomized and
  nonrandomized studies.
\newblock {\em Journal of educational Psychology\/}~{\em 66\/}(5), 688.

\bibitem[\protect\citeauthoryear{Rubin}{Rubin}{1980}]{rubin1980randomization}
Rubin, D.~B. (1980).
\newblock Randomization analysis of experimental data: The fisher randomization
  test comment.
\newblock {\em Journal of the American Statistical Association\/}~{\em
  75\/}(371), 591--593.

\bibitem[\protect\citeauthoryear{Sobel}{Sobel}{2006}]{sobel2006randomized}
Sobel, M.~E. (2006).
\newblock What do randomized studies of housing mobility demonstrate? {C}ausal
  inference in the face of interference.
\newblock {\em Journal of the American Statistical Association\/}~{\em
  101\/}(476), 1398--1407.

\bibitem[\protect\citeauthoryear{Splawa-Neyman, Dabrowska, and
  Speed}{Splawa-Neyman et~al.}{1990}]{splawa1990application}
Splawa-Neyman, J., D.~M. Dabrowska, and T.~Speed (1990).
\newblock On the application of probability theory to agricultural experiments.
  essay on principles. section 9.
\newblock {\em Statistical Science\/}, 465--472.

\bibitem[\protect\citeauthoryear{Steele}{Steele}{2004}]{steele2004cauchy}
Steele, J.~M. (2004).
\newblock {\em The Cauchy-Schwarz master class: an introduction to the art of
  mathematical inequalities}.
\newblock Cambridge University Press.

\bibitem[\protect\citeauthoryear{Sussman and Airoldi}{Sussman and
  Airoldi}{2017}]{sussman2017elements}
Sussman, D.~L. and E.~M. Airoldi (2017).
\newblock Elements of estimation theory for causal effects in the presence of
  network interference.
\newblock {\em arXiv preprint arXiv:1702.03578\/}.

\bibitem[\protect\citeauthoryear{Tchetgen and VanderWeele}{Tchetgen and
  VanderWeele}{2012}]{tchetgen2012causal}
Tchetgen, E. J.~T. and T.~J. VanderWeele (2012).
\newblock On causal inference in the presence of interference.
\newblock {\em Statistical methods in medical research\/}~{\em 21\/}(1),
  55--75.

\bibitem[\protect\citeauthoryear{Toulis and Kao}{Toulis and
  Kao}{2013}]{toulis2013estimation}
Toulis, P. and E.~Kao (2013).
\newblock Estimation of causal peer influence effects.
\newblock In {\em International conference on machine learning}, pp.\
  1489--1497.

\bibitem[\protect\citeauthoryear{Ugander, Karrer, Backstrom, and
  Kleinberg}{Ugander et~al.}{2013}]{ugander2013graph}
Ugander, J., B.~Karrer, L.~Backstrom, and J.~Kleinberg (2013).
\newblock Graph cluster randomization: Network exposure to multiple universes.
\newblock In {\em Proceedings of the 19th ACM SIGKDD international conference
  on Knowledge discovery and data mining}, pp.\  329--337.

\end{thebibliography}

\newpage
\begin{center}
{\large\bf SUPPLEMENTARY MATERIAL}
\end{center}


\appendix

\section{Additional simulation results}
\label{AppendixC}

We provide complete simulation results for all models of response given in Table 1.

\subsection{Estimates under complete randomization}
\begin{table}[h!]
  \caption{Estimates Under Completely Randomized Design Model 2}%
  \label{table3.3}
  \centering
  \begin{tabular}[c]{lccccc}\\
        Estimator & Effects & Emp.Estimates & Emp.Var & Emp.S.D. & Var Estimate \\
\hline
       $ \widehat{\delta}_{{HT,tot}}$ & 1 & 1.0707  & 1.306 & 1.1430 & 1.3314 \\
       $ \widehat{\delta^*}_{{HT,tot}}$ & 1 & 1.0707  & 1.306 & 1.1430 & 1.3314 \\
       $\widehat{\delta}_{{HT,dir}}$ & 1 & 1.0265  & 0.7273 & 0.8528 & 0.8050 \\
       $ \widehat{\delta^*}_{{HT,dir}}$ & 1 & 1.0300  & 0.3377 & 0.5812 &  0.6214\\
       $\widehat{\delta}_{{HT,ind}}$ & 0 & 0.0442  & 0.9404 & 0.9697 & 0.9393\\
       $ \widehat{\delta^*}_{{HT,ind}}$ & 0 & 0.0406  &  0.6081 & 0.7798 & 0.7381\\
       $\widehat{\delta}_{{HT,1^{st}}}$  & 0 & 0.0346  & 0.6273 & 0.7920 & 0.6821 \\
       $\widehat{\delta^*}_{{HT,1^{st}}}$ & 0 & 0.01758  & 0.3009 & 0.5486 & 0.4749  \\
        $\widehat{\delta}_{{HT,2^{nd}}}$ & 0 & -0.0111   & 0.4642 & 0.6813 & 0.6023 \\
       $\widehat{\delta^*}_{{HT,2^{nd}}}$  & 0 & -0.0052  & 0.3328 & 0.5769 & 0.5601 \\
        $\widehat{\delta}_{{HT,3^{rd}}}$ & 0 & 0.0207  & 0.4541 & 0.6738 & 0.6076 \\
       $\widehat{\delta^*}_{{HT,3^{rd}}}$ & 0 & 0.0282 & 0.3525 & 0.5937 & 0.5359\\
\hline
\end{tabular}
\end{table}

\begin{table}
  \caption{Estimates Under Completely Randomized Design Model 3}%
  \label{table3.4}
  \centering
  \begin{tabular}[c]{lccccc}\\
        Estimator & Effects & Emp.Estimates & Emp.Var & Emp.S.D. & Var Estimate \\
\hline
       $ \widehat{\delta}_{{HT,tot}}$   & 4 &  4.0489  & 2.9225 & 1.7095 & 3.3552\\
       $ \widehat{\delta^*}_{{HT,tot}}$  & 4 &  4.0489 & 2.9225 & 1.7095 & 3.3552\\
       $\widehat{\delta}_{{HT,dir}}$  & 4 & 4.0047  & 2.2827 & 1.5108 & 2.5258  \\
       $ \widehat{\delta^*}_{{HT,dir}}$   & 4 & 4.0292  & 0.9267 & 0.9626 & 2.1155 \\
       $\widehat{\delta}_{{HT,ind}}$ & 4 & 0.0442  &  0.9404 & 0.9697 & 0.9393 \\
       $ \widehat{\delta^*}_{{HT,ind}}$   & 0 & 0.01978  & 1.1883 & 1.0901 & 1.7146 \\
       $\widehat{\delta}_{{HT,1^{st}}}$  & 0 &  0.0346  & 0.6273 & 0.7920 & 0.6821 \\
       $\widehat{\delta^*}_{{HT,1^{st}}}$ & 0 & 0.01286  &  0.8246 &  0.9081 & 1.4569 \\
        $\widehat{\delta}_{{HT,2^{nd}}}$ & 0 & -0.0111   & 0.4642 & 0.6813 & 0.6023 \\
       $\widehat{\delta^*}_{{HT,2^{nd}}}$  & 0 & -0.01445  & 1.0133 & 1.0066 & 1.7389 \\
        $\widehat{\delta}_{{HT,3^{rd}}}$ & 0 & 0.0207  & 0.4541 & 0.6738 & 0.6076 \\
       $\widehat{\delta^*}_{{HT,3^{rd}}}$ & 0 & 0.0213  & 1.0711 & 1.0349 & 1.6299 \\
\hline
\end{tabular}
\end{table}

\begin{table}
  \caption{Estimates Under Completely Randomized Design Model 4}%
  \label{table3.5}
  \centering
  \begin{tabular}[c]{lccccc}\\
        Estimator & Effects & Emp.Estimates & Emp.Var & Emp.S.D. & Var Estimate \\
\hline
       $ \widehat{\delta}_{{HT,tot}}$ & 3.5 & 3.5526  & 2.5285 & 1.5901 &  2.8557\\
       $ \widehat{\delta^*}_{{HT,tot}}$ & 3.5 & 3.5526  & 2.5285 & 1.5901 &  2.8557\\
       $\widehat{\delta}_{{HT,dir}}$ & 0 & -0.0123  & 2.9637 & 1.7215 & 3.2323\\
       $ \widehat{\delta^*}_{{HT,dir}}$ & 0 & 0.0073  & 0.8913 & 0.9441 & 1.5485\\
       $\widehat{\delta}_{{HT,ind}}$ & 3.5 & 3.5649  & 1.5750 & 1.2550 & 2.3108 \\
       $ \widehat{\delta^*}_{{HT,ind}}$  & 3.5 & 3.5453  & 0.9523 & 0.9758 & 1.6828 \\
       $\widehat{\delta}_{{HT,1^{st}}}$  & 2 & 2.0414  & 0.8733 & 0.9345 & 1.1102 \\
       $\widehat{\delta^*}_{{HT,1^{st}}}$ & 2 & 2.0261  & 0.3901 &  0.6245 & 0.7582\\
        $\widehat{\delta}_{{HT,2^{nd}}}$  & 1 & 1.0003  & 1.4039 & 1.1848 & 1.9204\\
       $\widehat{\delta^*}_{{HT,2^{nd}}}$  & 1 & 0.9961  & 0.7423 & 0.8615 & 1.4553 \\
        $\widehat{\delta}_{{HT,3^{rd}}}$ & 0.5 & 0.5230  & 1.8611 & 1.3642 &  2.5928 \\
       $\widehat{\delta^*}_{{HT,3^{rd}}}$ & 0.5 & 0.5230 & 1.1157 & 1.0563 & 1.8726\\
\hline
\end{tabular}
\end{table}

\begin{table}
  \caption{Estimates Under Completely Randomized Design Model 6}%
  \label{table3.7}
  \centering
  \begin{tabular}[c]{lccccc}\\
        Estimator & Effects & Emp.Estimates  & Emp.Var & Emp.S.D. & Var Estimate \\
\hline
       $ \widehat{\delta}_{{HT,tot}}$   & 7.5 & 7.5236  & 7.0767 & 2.6602 & 8.6687 \\
       $ \widehat{\delta^*}_{{HT,tot}}$  & 7.5 & 7.5236  & 7.0767 & 2.6602 & 8.6687 \\
       $\widehat{\delta}_{{HT,dir}}$  & 4 & 3.9586  & 7.9109 & 2.8126 & 8.1389\\
       $ \widehat{\delta^*}_{{HT,dir}}$ & 4 & 4.0061  & 2.3713 & 1.5399 & 4.4075\\
       $\widehat{\delta}_{{HT,ind}}$ & 3.5 & 3.5649  & 1.5750 & 1.2550 & 2.3108\\
       $ \widehat{\delta^*}_{{HT,ind}}$ & 3.5 & 3.5174  & 2.1777 & 1.4757 & 3.5289 \\
       $\widehat{\delta}_{{HT,1^{st}}}$  & 2 & 2.0414  & 0.8733 & 0.9345 & 1.1102 \\
       $\widehat{\delta^*}_{{HT,1^{st}}}$ & 2 & 2.0198  & 1.1942 & 1.0928 & 2.3348\\
        $\widehat{\delta}_{{HT,2^{nd}}}$  & 1 & 1.0003  & 1.4039 & 1.1848 & 1.9204\\
       $\widehat{\delta^*}_{{HT,2^{nd}}}$  & 1 & 0.9838  & 2.1622 & 1.4704 & 4.1131\\
        $\widehat{\delta}_{{HT,3^{rd}}}$ & 0.5 & 0.5230  & 1.8611 & 1.3642 &  2.5928\\
       $\widehat{\delta^*}_{{HT,3^{rd}}}$ & 0.5 & 0.5138  & 3.0275 & 1.7399 & 4.7936 \\
\hline
\end{tabular}
\end{table}

\begin{table}
  \caption{Estimates Under Completely Randomized Design Model 7}%
  \label{table3.8}
  \centering
  \begin{tabular}[c]{lccccc}\\
        Estimator & Effects & Emp.Estimates & Emp.Var & Emp.S.D. & Var Estimate \\
\hline
       $ \widehat{\delta}_{{HT,tot}}$ & 6 & 6.0344 & 4.9972 & 2.2354 & 6.0021\\
       $ \widehat{\delta^*}_{{HT,tot}}$& 6 & 6.0344 & 4.9972 & 2.2354 & 6.0021 \\
       $\widehat{\delta}_{{HT,dir}}$ & 0 & -0.0452 & 7.4190 & 2.7237 & 7.9636 \\
       $ \widehat{\delta^*}_{{HT,dir}}$ & 0 & -0.0091 & 2.0143 & 1.4192 & 3.4718\\
       $\widehat{\delta}_{{HT,ind}}$ & 6 & 6.0797 & 2.7909 & 1.6705 & 4.850\\
       $ \widehat{\delta^*}_{{HT,ind}}$ & 6 & 6.0436 & 1.6675 & 1.2913 & 3.5938\\
       $\widehat{\delta}_{{HT,1^{st}}}$  & 3 &3.0449  & 1.1582 & 1.0762 & 1.6049\\
       $\widehat{\delta^*}_{{HT,1^{st}}}$ & 3 & 3.0295  & 0.5412 & 0.7357 & 1.1811\\
        $\widehat{\delta}_{{HT,2^{nd}}}$  & 2 &2.0092 & 2.8500 & 1.6882 & 3.9616\\
       $\widehat{\delta^*}_{{HT,2^{nd}}}$  & 2 & 1.9961  & 1.4097 & 1.1873 & 2.9671\\
        $\widehat{\delta}_{{HT,3^{rd}}}$ & 1 &1.0256  & 4.5078 & 2.1231 & 6.1172\\
       $\widehat{\delta^*}_{{HT,3^{rd}}}$ & 1 & 1.0179  & 2.6139 & 1.6167 & 4.4043\\
\hline
\end{tabular}
\end{table}

\begin{table}
  \caption{Estimates Under Completely Randomized Design Model 8}%
  \label{table3.9}
  \centering
  \begin{tabular}[c]{lccccc}\\
        Estimator & Effects & Emp.Estimates & Emp.Var & Emp.S.D. & Var Estimate \\
\hline
       $ \widehat{\delta}_{{HT,tot}}$   & 7 & 7.0272  & 6.3337 & 2.5166 & 7.7149 \\
       $ \widehat{\delta^*}_{{HT,tot}}$  & 7 & 7.0272  & 6.3337 & 2.5166 & 7.7149 \\
       $\widehat{\delta}_{{HT,dir}}$  & 1 & 0.9474  & 8.9409 & 2.9901 & 9.4071 \\
       $ \widehat{\delta^*}_{{HT,dir}}$   & 1 & 0.9905 & 2.4239 & 1.5568 & 4.1241 \\
       $\widehat{\delta}_{{HT,ind}}$ & 6 & 6.0797 & 2.7909 & 1.6705 & 4.8503\\
       $ \widehat{\delta^*}_{{HT,ind}}$ & 6 & 6.0366  & 1.9707 & 1.4038 & 4.0101\\
       $\widehat{\delta}_{{HT,1^{st}}}$  & 3 & 3.0449  & 1.1582 & 1.0762 & 1.6049\\
       $\widehat{\delta^*}_{{HT,1^{st}}}$ & 3 & 3.0280 & 0.6690 & 0.8179 & 1.4541\\
        $\widehat{\delta}_{{HT,2^{nd}}}$  & 2 & 2.0092 & 2.8500 & 1.6882 & 3.9616 \\
       $\widehat{\delta^*}_{{HT,2^{nd}}}$  & 2 & 1.9930 & 1.7408 & 1.3194 & 3.6167\\
        $\widehat{\delta}_{{HT,3^{rd}}}$ & 1 & 1.0256  & 4.5078 & 2.1231 & 6.1172\\
       $\widehat{\delta^*}_{{HT,3^{rd}}}$ & 1 & 1.0156 & 3.1489 & 1.7745 & 5.2252 \\
\hline
\end{tabular}
\end{table}
\newpage


\subsection{Estimates under Bernoulli randomization}

\begin{table}[h!]
  \caption{Estimates Under Bernoulli Randomization Model 2}%
  \label{table2Ber}
  \centering
  \begin{tabular}[c]{lccccc}\\
        Estimator & Effects & Emp.Estimates  & Emp.Var & Emp.S.D. & Var Estimate \\
\hline
        $ \widehat{\delta}_{{HT,tot}}$   & 1 & 1.0326  & 1.4420 & 1.2008 & 1.4406 \\
       $ \widehat{\delta^*}_{{HT,tot}}$   & 1 & 1.0326 & 1.4420 & 1.2008 & 1.4406 \\
       $\widehat{\delta}_{{HT,dir}}$  & 1 & 1.0116  & 0.8786 & 0.9373 & 0.8960 \\
       $ \widehat{\delta^*}_{{HT,dir}}$   & 1 & 1.0277 & 0.3573 & 0.5978 & 0.6357  \\
       $\widehat{\delta}_{{HT,ind}}$ & 0 & 0.0210  & 0.8232 & 0.9073 & 0.9258 \\
       $ \widehat{\delta^*}_{{HT,ind}}$   & 0 & 0.0049  & 0.6462 & 0.8038 & 0.7994  \\
       $\widehat{\delta}_{{HT,1^{st}}}$  & 0 & 0.0502  & 0.6502 & 0.8063 & 0.6804 \\
       $\widehat{\delta^*}_{{HT,1^{st}}}$ & 0 & 0.0142 & 0.2858 & 0.5346 & 0.4783  \\
        $\widehat{\delta}_{{HT,2^{nd}}}$  & 0 & -0.0181  & 0.4430 & 0.6656 & 0.5985 \\
       $\widehat{\delta^*}_{{HT,2^{nd}}}$  & 0 & -0.0177  & 0.3250 & 0.5700 & 0.5625  \\
        $\widehat{\delta}_{{HT,3^{rd}}}$ & 0 & -0.0110  & 0.4489 & 0.6700 & 0.6048 \\
       $\widehat{\delta^*}_{{HT,3^{rd}}}$ & 0 & 0.0084  & 0.3846 & 0.6202 & 0.5497  \\
\hline
\end{tabular}
\end{table}

\begin{table}[tbp]
  \caption{Estimates Under Bernoulli Randomization Model 3}%
  \label{table3Ber}
  \centering
  \begin{tabular}[c]{lccccc}\\
        Estimator & Effects & Emp.Estimates  & Emp.Var & Emp.S.D. & Var Estimate \\
\hline
        $ \widehat{\delta}_{{HT,tot}}$   & 4 & 4.1129  & 4.0302 & 2.0075 & 4.5346  \\
       $ \widehat{\delta^*}_{{HT,tot}}$  & 4 & 4.1129  & 4.0302 & 2.0075 & 4.5346  \\
       $\widehat{\delta}_{{HT,dir}}$  & 4 & 4.0919 & 3.4654 & 1.8615 & 3.6069  \\
       $ \widehat{\delta^*}_{{HT,dir}}$   & 4 & 4.0797 & 0.9582 & 0.9789 & 2.2095  \\
       $\widehat{\delta}_{{HT,ind}}$ & 0 & 0.0210  & 0.8232 & 0.9073 & 0.9258 \\
       $ \widehat{\delta^*}_{{HT,ind}}$   & 0 & 0.0331 & 1.8081 & 1.3446 & 2.3816  \\
       $\widehat{\delta}_{{HT,1^{st}}}$  & 0 & 0.0502  & 0.6502 & 0.8063 & 0.6804 \\
       $\widehat{\delta^*}_{{HT,1^{st}}}$ & 0 & 0.0051 & 0.8466 & 0.9201 & 1.5260  \\
        $\widehat{\delta}_{{HT,2^{nd}}}$  & 0 & -0.0181 & 0.4430 & 0.6656 & 0.5985 \\
       $\widehat{\delta^*}_{{HT,2^{nd}}}$  & 0 & -0.0456 & 1.0226 & 1.0112 & 1.7932  \\
        $\widehat{\delta}_{{HT,3^{rd}}}$ & 0 & -0.0110  & 0.4489 & 0.6700 & 0.6048 \\
       $\widehat{\delta^*}_{{HT,3^{rd}}}$ & 0 & 0.0737  & 1.2557 & 1.1205 & 1.7099  \\
\hline
\end{tabular}
\end{table}

\begin{table}[tbp]
  \caption{Estimates Under Bernoulli Randomization Model 4}%
  \label{table4Ber}
  \centering
  \begin{tabular}[c]{lccccc}\\
        Estimator & Effects & Emp.Estimates & Emp.Var & Emp.S.D. & Var Estimate \\
\hline
        $ \widehat{\delta}_{{HT,tot}}$  & 3.5 & 3.5995  & 3.3899 & 1.8411 & 3.7707  \\
       $ \widehat{\delta^*}_{{HT,tot}}$  & 3.5 & 3.5995 & 3.3899 & 1.8411 & 3.7707  \\
       $\widehat{\delta}_{{HT,dir}}$  & 0 & 0.0745 & 3.1656 & 1.7792 & 3.4406  \\
       $ \widehat{\delta^*}_{{HT,dir}}$  & 0 & 0.0552 & 0.9018 & 0.9496 & 1.6046  \\
       $\widehat{\delta}_{{HT,ind}}$ & 3.5 & 3.5249  & 1.8276 & 1.3519 & 2.5207  \\
       $ \widehat{\delta^*}_{{HT,ind}}$   & 3.5 & 3.5443  & 1.5038 & 1.2262 & 2.1947  \\
       $\widehat{\delta}_{{HT,1^{st}}}$  & 2 & 2.0532  & 0.9882 & 0.9940 & 1.1739  \\
       $\widehat{\delta^*}_{{HT,1^{st}}}$ & 2 & 2.0206  & 0.3871 & 0.6222 & 0.7689  \\
        $\widehat{\delta}_{{HT,2^{nd}}}$  & 1 & 0.9878  & 1.4253 & 1.1938 & 2.0036  \\
       $\widehat{\delta^*}_{{HT,2^{nd}}}$  & 1 & 0.9675 & 0.8350 & 0.9137 & 1.5577  \\
        $\widehat{\delta}_{{HT,3^{rd}}}$ & 0.5 & 0.4838 & 2.2543 & 1.5014 & 2.8237\\
       $\widehat{\delta^*}_{{HT,3^{rd}}}$ & 0.5 & 0.5561  & 1.5671 & 1.2518 & 2.1355  \\
\hline
\end{tabular}
\end{table}

\begin{table}[tbp]
  \caption{Estimates Under Bernoulli Randomization Model 6}%
  \label{table6Ber}
  \centering
  \begin{tabular}[c]{lccccc}\\
        Estimator & Effects & Emp.Estimates & Emp.Var & Emp.S.D. & Var Estimate \\
\hline
        $ \widehat{\delta}_{{HT,tot}}$   & 7.5 & 7.7065 & 10.8531 & 3.2944 & 12.6623  \\
       $ \widehat{\delta^*}_{{HT,tot}}$  & 7.5 & 7.7065 & 10.8531 & 3.2944 & 12.6623  \\
       $\widehat{\delta}_{{HT,dir}}$  & 4 & 4.1815 & 10.0122 & 3.1642 & 10.4070 \\
       $ \widehat{\delta^*}_{{HT,dir}}$  & 4 & 4.1245 & 2.4463 & 1.5640 & 4.7047  \\
       $\widehat{\delta}_{{HT,ind}}$ & 3.5 & 3.5249 & 1.8276 & 1.3519 & 2.5207  \\
       $ \widehat{\delta^*}_{{HT,ind}}$   & 3.5 & 3.5819 & 4.4755 & 2.1155 & 5.7415  \\
       $\widehat{\delta}_{{HT,1^{st}}}$  & 2 & 2.0532 & 0.9882 & 0.9940 & 1.1739  \\
       $\widehat{\delta^*}_{{HT,1^{st}}}$ & 2 & 2.0084  & 1.1391 & 1.0672 & 2.3490  \\
        $\widehat{\delta}_{{HT,2^{nd}}}$  & 1 & 0.9878 & 1.4253 & 1.1938 & 2.0036 \\
       $\widehat{\delta^*}_{{HT,2^{nd}}}$  & 1 & 0.9304 & 2.4611 & 1.5687 & 4.4098 \\
        $\widehat{\delta}_{{HT,3^{rd}}}$ & 0.5 & 0.4838  & 2.2543 & 1.5014 & 2.8237 \\
       $\widehat{\delta^*}_{{HT,3^{rd}}}$ & 0.5 & 0.6431  & 4.0935 & 2.0232 & 5.3983  \\
\hline
\end{tabular}
\end{table}

\begin{table}[tbp]
  \caption{Estimates Under Bernoulli Randomization Model 7}%
  \label{table7Ber}
  \centering
  \begin{tabular}[c]{lccccc}\\
        Estimator & Effects & Emp.Estimates  & Emp.Var & Emp.S.D. & Var Estimate \\
\hline
        $ \widehat{\delta}_{{HT,tot}}$   & 6 & 6.1664 & 7.4275 & 2.7253 & 8.5832  \\
       $ \widehat{\delta^*}_{{HT,tot}}$  & 6 & 6.1664 & 7.4275 & 2.7253 & 8.5832  \\
       $\widehat{\delta}_{{HT,dir}}$  & 0 & 0.1386  & 7.8388 & 2.7997 & 8.5881  \\
       $ \widehat{\delta^*}_{{HT,dir}}$   & 0 & 0.0872 & 2.0487 & 1.4313 & 3.6387  \\
       $\widehat{\delta}_{{HT,ind}}$ & 6 & 6.0277 & 3.5973 & 1.8966 & 5.4461  \\
       $ \widehat{\delta^*}_{{HT,ind}}$   & 6 & 6.0791  & 3.2309 & 1.7974 & 5.0241  \\
       $\widehat{\delta}_{{HT,1^{st}}}$ & 3 & 3.0547  & 1.3709 & 1.1708 & 1.7503  \\
       $\widehat{\delta^*}_{{HT,1^{st}}}$ & 3 & 3.0223  & 0.5510 & 0.7423 & 1.2101  \\
        $\widehat{\delta}_{{HT,2^{nd}}}$  & 2 & 1.9923 & 2.9090 & 1.7056 & 4.1353 \\
       $\widehat{\delta^*}_{{HT,2^{nd}}}$  & 2 & 1.9521 & 1.6716 & 1.2929 & 3.2181  \\
        $\widehat{\delta}_{{HT,3^{rd}}}$ & 1 & 0.9806  & 5.3764 & 2.3187 & 6.7556  \\
       $\widehat{\delta^*}_{{HT,3^{rd}}}$ & 1 & 1.1047 & 3.8051 & 1.9506 & 5.1576 \\
\hline
\end{tabular}
\end{table}

\begin{table}[tbp]
  \caption{Estimates Under Bernoulli Randomization Model 8}%
  \label{table8Ber}
  \centering
  \begin{tabular}[c]{lccccc}\\
        Estimator & Effects & Emp.Estimates & Emp.Var & Emp.S.D. & Var Estimate \\
\hline
        $ \widehat{\delta}_{{HT,tot}}$   & 7 & 7.1931 & 9.6277 & 3.1028 & 11.2033  \\
       $ \widehat{\delta^*}_{{HT,tot}}$  & 7 & 7.1931 & 9.6277 & 3.1028 & 11.2033  \\
       $\widehat{\delta}_{{HT,dir}}$ & 1 & 1.1654 & 9.7750 & 3.1265 & 10.5230  \\
       $ \widehat{\delta^*}_{{HT,dir}}$   & 1 & 1.1045 & 2.4792 & 1.5745 & 4.3616  \\
       $\widehat{\delta}_{{HT,ind}}$ & 6 & 6.0277 & 3.5973 & 1.8966 & 5.4461  \\
       $ \widehat{\delta^*}_{{HT,ind}}$   & 6 & 6.0885 & 4.0546 & 2.0136 & 5.9402 \\
       $\widehat{\delta}_{{HT,1^{st}}}$ & 3 & 3.0547 & 1.3709 & 1.1708 & 1.7503  \\
       $\widehat{\delta^*}_{{HT,1^{st}}}$ & 3 & 3.0192 & 0.6476 & 0.8047 & 1.4636  \\
        $\widehat{\delta}_{{HT,2^{nd}}}$ & 2 & 1.9923 & 2.9090 & 1.7056 & 4.1353  \\
       $\widehat{\delta^*}_{{HT,2^{nd}}}$  & 2 & 1.9428 & 2.0746 & 1.4403 & 3.9275  \\
        $\widehat{\delta}_{{HT,3^{rd}}}$ & 1 & 0.9806 & 5.3764 & 2.3187 & 6.7556  \\
       $\widehat{\delta^*}_{{HT,3^{rd}}}$ & 1 & 1.1264 & 4.5503 & 2.1331 & 6.0993 \\
\hline
\end{tabular}
\end{table}

\end{document}